\SetMathAlphabet{\mathsfbf}{sans}{\sansmathencoding}{\sfdefault}{bx}{sl}
\definecolor{darkgray}{rgb}{.4,.4,.4}
\ttfamily\linespread{4},
\newcommand\hypo{\Hypo}
\newcommand\infer{\Infer}
\definecolor{lipicsgrey}{HTML}{99999c}
\theoremstyle{remark}
\newtheorem{notation}{Notation}
\newcommand{\BNF}{\enspace \ensuremath{\Vert} \enspace} %
\DeclareMathOperator{\id}{id}
\DeclareMathOperator{\var}{var}
\DeclareMathOperator{\Card}{Card}
\newcommand{\mat}[1]{\left(\begin{smallmatrix}#1\end{smallmatrix}\right)} %
\newcommand{\zmat}{\mathbf{0}} %
\newcommand{\umat}{\mathbf{1}} %
\newcommand{\pr}{\lstinline[mathescape]}
\newcommand*{\eg}{e.g.\@\xspace}
\newcommand*{\cf}{cf.\@\xspace}
\newcommand*{\ie}{i.e.\@\xspace}
\newcommand*{\etc}{%
	\@ifnextchar{.}%
	{etc}%
	{etc.\@\xspace}%
}
\renewcommand{\st}{s.t.\@\xspace}
\newcommand*{\resp}{resp.\@\xspace}
\newcommand*{\wrt}{w.r.t.\@\xspace}
\newcommand*{\wlg}{w.l.o.g.\@\xspace}
\renewcommand{\boxed}[1]{\text{\fboxsep=.2em\fbox{\m@th$\displaystyle#1$}}}
\newcommand{\vdashJK}{\vdash_{\textnormal{\textsc{jk}}}}
\newcommand{\mwp}{\textnormal{\textsc{mwp}}\xspace} %
\definecolor{clem}{HTML}{60656F} \definecolor{thos}{HTML}{BABD8D} \definecolor{thor}{HTML}{EB6424} \definecolor{neea}{HTML}{823038}
\title{mwp-Analysis Improvement and Implementation: Realizing Implicit Computational Complexity}
\titlerunning{mwp-Analysis Improvement and Implementation: Realizing Implicit Complexity} %
\author{Clément Aubert}{School of Computer and Cyber Sciences, Augusta University \and \url{https://spots.augusta.edu/caubert/}}{caubert@augusta.edu}{https://orcid.org/0000-0001-6346-3043}{}
\author{Thomas Rubiano}{LIPN – UMR 7030 Université Sorbonne Paris Nord \and \url{https://people.irisa.fr/Thomas.Rubiano/}}{rubiano.thomas@gmail.com}{}{}
\author{Neea Rusch}{School of Computer and Cyber Sciences, Augusta University \and \url{https://nkrusch.github.io/}}{nrusch@augusta.edu}{https://orcid.org/0000-0002-7354-5330}{}
\author{Thomas Seiller}{LIPN – UMR 7030 Université Sorbonne Paris Nord \and CNRS \and \url{https://www.seiller.org/}}{seiller@lipn.fr}{https://orcid.org/0000-0001-6313-0898}{}
\authorrunning{C. Aubert, Th. Rubiano, N. Rusch and Th. Seiller}
\keywords{Static Program Analysis, Implicit Computational Complexity, Automatic Complexity Analysis, Program Verification}
\begin{document}

\maketitle
\begin{abstract}
	Implicit Computational Complexity (ICC) drives better understanding of complexity classes, but it also guides the development of resources-aware languages and static source code analyzers.
	Among the methods developed, the \emph{mwp-flow analysis}~\cite{Jones2009} certifies polynomial bounds on the size of the values manipulated by an imperative program.
	This result is obtained by bounding the transitions between states instead of focusing on states in isolation, as most static analyzers do, and is not concerned with termination or tight bounds on values.
	Those differences, along with its built-in compositionality, make the mwp-flow analysis a good target for determining how ICC-inspired techniques diverge compared with more traditional static analysis methods.
	This paper's contributions are three-fold: we fine-tune the internal machinery of the original analysis to make it tractable in practice; we extend the analysis to function calls and leverage its machinery to compute the result of the analysis efficiently; and we implement the resulting analysis as a lightweight tool to automatically perform data-size analysis of \texttt{C} programs.
	This documented effort prepares and enables the development of certified complexity analysis, by transforming a costly analysis into a tractable program, that furthermore decorrelates the problem of deciding if a bound exist with the problem of computing it.
\end{abstract}

\section{Introduction: letting ICC drive the development of static analyzers}
\label{sec:intro}

Certifying program resource usages is possibly as crucial as the specification of program correctness, since a guaranteed correct program whose memory usage exceeds available resources is, in fact, unreliable.
The field of Implicit Computational Complexity (ICC) theory~\cite{DalLago2012a} pioneers in \enquote{embedding} in the program itself a guarantee of its resource usage, using \eg bounded recursion~\cite{Bellantoni1992,Leivant1993}
or type systems~\cite{Baillot2004,Lafont2004}.
This field initiated numerous distinct and original approaches, primarily to characterize complexity classes in a machine-independent way, with increasing expressivity, but these approaches have rarely materialized into concrete programming languages or program analyzers: even if, as opposed to traditional complexity, its models are generally expressive enough to write down actual algorithms~\cite[p.~11]{Moyen2017c}, they rarely escape the sphere of academia or extend beyond toy languages, with a few exceptions~\cite{Avanzini2017,Hoffmann2012b}.
However, by abstracting away constant factors and insignificant orders of magnitude, it is frequently conjectured that ICC will allow sidestepping some of the difficult issues one usually has to face when inferring the resource usage of a concrete program.

This work reinforces this conjecture by adjusting, improving and implementing an existing ICC technique, the \emph{mwp-bounds analysis}~\cite{Jones2009}, that certifies that the values computed by an imperative program will be bounded by polynomials in the program's input.
This flow analysis is elegant but computationally costly, and it missed an opportunity to leverage its built-in compositionality: we address both issues by revisiting and expanding the original flow calculus, and further make our point by implementing it on a subset of the \texttt{C} programming language.
While the theory has been improved to allow analysis of function definitions and calls---including recursive ones, a feature not widely supported~\cite[p.~359]{Hainry2021}---, its integration into the implementation is underway, as we placed primary focus on developing an efficient and implementable technique for program analysis.
Implementing a tool along the theory enabled testing improvements in real-life, which in return drove adjustments to the theory.

Our enhanced technique answers positively two questions asked by the authors of the original analysis~\cite[Section 1.2]{Jones2009}, namely
\begin{enumerate*}
	\item Can the method be extended to richer languages?
	\item Can it lead to powerful and convenient tools?
\end{enumerate*}
It also supports the conjecture that ICC can be used to construct concrete tools, but highlights that doing so requires adjusting the theory to make it tractable in practice.
This work also provides better insight into the original analysis, by \eg separating the algorithm to decide the existence of a bound from its evaluation into a concrete bound; and by illustrating its plasticity: while our analysis purely extends the original one, it nevertheless greatly alters its internal machinery to ease its implementability.
Last but not least, our technique is orthogonal to most static analysis methods, that focus on worst-case resource-usage complexity or termination, while ours establishes that the growth rate of variables is at most polynomially related to their inputs.

\makeatletter
\def\@pdfborder{0 0 0}%
\def\@pdfborderstyle{/S/U/W 1}%
\makeatother

\section{Background: the original flow analysis}
The original analysis~\cite{Jones2009} computes a polynomial bound---if it exists---on the sizes of variables in an imperative \texttt{while} programming language, extended with a \texttt{loop} operator, by computing for each variable a vector that tracks how it depends from other variables---and the program itself gets assigned a matrix collecting those vectors.
While this does not ensure termination, it provides a certificate guaranteeing that the program uses at most a polynomial amount of space, and as a consequence that \emph{if} it terminates, it will do so in polynomial time.

\subsection{Language analyzed: fragments of imperative language}
\label{subsec:language}
\begin{definition}[Imperative Language]
	\label{def:lang}
	Letting variables range over \pr|X| and \pr|Y| and boolean expressions over \pr{b}, we define \emph{expressions} \pr|e| and \emph{commands} \pr|C| as follows:
	\begin{align*}
		\text{\pr|e|} \coloneqq & \text{\pr|X|} \BNF \text{\pr|X - Y|} \BNF \text{\pr|X + Y|} \BNF \text{\pr|X * Y|}                                                           \\
		\text{\pr|C|} \coloneqq & \text{\pr|X = e|} \BNF \text{\pr|if b then C else C|} \BNF \text{\pr|while b do \{C\}|} \BNF \text{\pr|loop X \{C\}|} \BNF \text{\pr|C ; C|}\end{align*}

	where \pr|loop X {C}| means \enquote{do \pr|{C}| \pr|X| times} and \pr|C;C| is used for sequentiality	(\enquote{do \pr|C|, then \pr|C|}).
	We write \enquote{program} for a series of commands composed sequentially.
\end{definition}

This language assumes that the program's inputs are the only variables, and that assigning a value to a variable inside the program is not permitted.
Extending flow calculi to those operations has been discussed~\cite[p.~3]{Jones2009} and proven possible~\cite{BenAmram2010}, but we leave this for future work---in particular, our \texttt{C} examples will be of \pr|foo| functions with their variables listed as parameters\footnote{Our implementation allows to relax this condition, as exemplified in \href{\hrefdeclaration}{\texttt{inline\_variable.c}}, without losing any of the results expressed in this paper. Assuming a fixed number of variables, known ahead of time, is mostly a theoretical artifact used to simplify the analysis. \label{int-var}}.
However, we disallow \wlg composed expressions of the form \pr{X + Y * Y}, that can always be dealt with in the style of three-address code.

\subsection{A flow calculus of mwp-bounds for complexity analysis}
\label{ssec:a-flow-calculus}

\emph{Flows} characterize controls from one variable to another, and can be, in increasing growth rate, of type \(0\)---the absence of any dependency---\emph{m}aximum, \emph{w}eak polynomial and \emph{p}olynomial.
The bounds on programs written in the syntax of \autoref{subsec:language} are represented and calculated thanks to vectors and matrices whose coefficients are elements of the mwp semi-ring.

\begin{definition}[The mwp semi-ring and matrices over it]
	\label{def:mwp-matrix-alg}
	Letting \(\mwp = \{0, m, w, p\}\) with \(0 < m < w < p\), and \(\alpha\), \(\beta\), \(\gamma\) range over \mwp, the \emph{mwp semi-ring} \((\mwp, 0, m, +, \times)\) is defined with \(+ = \max\), \(\alpha \times \beta = \max (\alpha, \beta)\) if \(\alpha, \beta \neq 0\), and \(0\) otherwise.

	We denote \(\mathbb{M}(\mwp)\) the matrices over \mwp, and, fixing \(n\in\mathbb{N}\), \(M\) for \(n \times n\) matrices over \mwp, \(M_{ij}\) for the coefficient in the \(i\)th row and \(j\)th column of \(M\), \(\oplus\) for the componentwise addition, and \(\otimes\) for the product of matrices defined in a standard way.
	The \(\zmat\)-element for addition is \(\zmat_{ij} = 0\) for all \(i, j\), and the \(\umat\)-element for product is \(\umat_{ii} = m\), \(\umat_{ij} = 0\) if \(i \neq j\), and the resulting structure
	\((\mathbb{M}(\mwp), \zmat, \umat, \otimes, \oplus)\) is a semi-ring that we simply write \(\mathbb{M}(\mwp)\).
	The closure operator \(\cdot^{*}\) is \(M^* = \umat \oplus M \oplus (M^2) \oplus \hdots\), for \(M^0 = \umat\), \(M^{m+1} = M \otimes M^m\).
\end{definition}

Although not crucial to understand our development, details about (strong) semi-rings and the mwp semi-ring are in \autoref{sec:app:mwp}, and the construction of a semi-ring whose elements are matrices with coefficients in a semi-ring---so, in particular, \(\mathbb{M}(\mwp)\)---is given in \autoref{sec:app:matrix}.

Below, we let \(V_1\), \(V_2\) be column vectors with values in \mwp, \(\alpha V_1\) be the usual scalar product, and \(V_1 \oplus V_2\) be defined componentwise.
We write \(\{_{i}^{\alpha}\}\) for the vector with \(0\) everywhere except for \(\alpha\) in its \(i\)th row, and \(\{_{i}^{\alpha} , _{j}^{\beta}\}\) for \(\{_{i}^{\alpha}\} \oplus \{_{j}^{\beta}\}\).

Replacing in a matrix \(M\) the \(j\)th column vector by \(V\) is denoted \(M \xleftarrow{j} V\).
The matrix \(M\) with \(M_{ij} = \alpha\) and \(0\) everywhere else is written \(\{_{i}^{\alpha} \rightarrow j\}\), and the set of variables in the expression \pr|e| is written \(\var(\text{\pr|e|})\).
The assumption is made that exactly \(n\) different variables are manipulated throughout the analyzed program, so that \(n\)-vectors are assigned to expressions---in a non-deterministic way, to capture larger classes of programs~\cite[Section 8]{Jones2009}---and \(n \times n\) matrices are assigned to commands using the rules presented \autoref{fig:orig-rules}~\cite[Section 5]{Jones2009}.

\begin{figure}
	\begin{subfigure}{\textwidth}
		\begin{centering}
			\begin{tabular}{c c c}
				\begin{prooftree}[small]
					\infer0[E1]{ \vdashJK \text{\pr|Xi|} : \{_{\text{\pr|i|}}^{m}\}}
				\end{prooftree}
				 & \hspace{1em} &
				\begin{prooftree}[small]
					\infer0[E2]{ \vdashJK \text{\pr{e}} : \{ _{\text{\pr|i|}}^{w} \mid \text{\pr|Xi|} \in \var(\text{\pr{e}}) \}}
				\end{prooftree}
				\\[1.2em]
				\begin{prooftree}[small]
					\hypo{\vdashJK \text{\pr{Xi}} : V_1}
					\hypo{\vdashJK \text{\pr{Xj}} : V_2}
					\infer[left label={\(\star\in\{+, -\}\)}]2[E3]{\vdashJK \text{\pr|Xi $\star$ Xj|} : pV_1 \oplus V_2}
				\end{prooftree}
				 &              &
				\begin{prooftree}[small]
					\hypo{\vdashJK \text{\pr{Xi}} : V_1}
					\hypo{\vdashJK \text{\pr{Xj}} : V_2}
					\infer[left label={\(\star\in\{+, -\}\)}]2[E4]{\vdashJK \text{\pr|Xi $\star$ Xj|} : V_1 \oplus pV_2}
				\end{prooftree}
			\end{tabular}
			\caption{Rules for assigning vectors to expressions}
			\label{fig:rules-expressions}
		\end{centering}
	\end{subfigure}
	\\[1.2em]
	\begin{subfigure}{\textwidth}
		\begin{centering}
			\begin{prooftree}[small]
				\hypo{ \vdashJK \text{\pr{e}} : V}
				\infer1[A]{\vdashJK \text{\pr|Xj = e|} : \umat \xleftarrow{\text{\pr|j|}} V}
			\end{prooftree}
			\hspace{1.6em}
			\begin{prooftree}[small]
				\hypo{ \vdashJK \text{\pr{C1}} : M_1}
				\hypo{ \vdashJK \text{\pr{C2}} : M_2}
				\infer2[C]{\vdashJK \text{\pr{C1; C2}} : M_1 \otimes M_2}
			\end{prooftree}
			\\[1.2em]
			\begin{prooftree}[small]
				\hypo{ \vdashJK \text{\pr{C1}} : M_1}
				\hypo{ \vdashJK \text{\pr{C2}} : M_2}
				\infer2[I]{\vdashJK \text{\pr|if b then C1 $\text{\hspace{.3em}}$ else C2|} : M_1 \oplus M_2} \end{prooftree}
			\\[1.2em]
			\begin{prooftree}[small]
				\hypo{\vdashJK \text{\pr|C|} : M}
				\infer[left label={\(\forall i, M_{ii}^* = m\)}]1[L]{\vdashJK \text{\pr|loop Xl \{C\}|} : M^* \oplus \{_{\text{\pr|l|}}^{p}\rightarrow j \mid \exists i, M_{ij}^* = p\}}
			\end{prooftree}
			\\[1.2em]
			\begin{prooftree}[small]
				\hypo{\vdashJK \text{\pr|C|} : M}
				\infer[left label={\(\forall i, M_{ii}^* = m\) and \(\forall i, j, M^*_{ij} \neq p\)}]1[W]{\vdashJK \text{\pr|while b do \{C\}|} : M^*}
			\end{prooftree}
			\caption{Rules for assigning matrices to commands}
			\label{fig:rules-commands}
		\end{centering}
	\end{subfigure}
	\caption{Original non-deterministic (\enquote{\textsc{J}ones-\textsc{K}ristiansen}) flow analysis rules}
	\label{fig:orig-rules}
\end{figure}

The intuition is that if \(\vdashJK \text{\pr|C|}:M\) can be derived, then all the values computed by \pr|C| will grow at most polynomially \wrt its inputs~\cite[Theorem 5.3]{Jones2009}, \eg will be bounded by \(\max(\vec{x}, p_1(\vec{y})) + p_2(\vec{z})\), where \(p_1\) and \(p_2\) are polynomials and \(\vec{x}\) (\resp \(\vec{y}\), \(\vec{z}\)) are \(m\)-(\resp \(w\)-, \(p\)-)annotated variables in the vector for the considered output.
Since the derivation system is non-deterministic, multiple matrices and polynomial bounds---that sometimes coincide---may be assigned to the same program.
Furthermore, the coefficient at \(M_{\text{\pr|ij|}}\) carries quantitative information about the way \pr|Xi| depends on \pr|Xj|, knowing that \(0\)- and \(m\)-flows are harmless and without constraints, but that \(w\)- and \(p\)- flows are more harmful \wrt polynomial bounds and need to be handled with care, particularly in loops---hence the condition on the L and W rules.
The derivation may fail---some programs may not be assigned a matrix---, if at least one of the variables used in the body of a loop depends \enquote{too strongly} upon another, making it impossible to ensure polynomial bounds on the loop itself.
We will use the following example as a common basis to discuss possible failure, non-determinism, and our
improvements.

\begin{example}

	\label{ex:start}
	Consider \pr|loop X3{X2 = X1 + X2}|.
	The body of the \pr|loop| command admits three different derivations, obtained by applying A to one of the three derivation of the expression \pr|X1 + X2|, that we name \(\pi_0\), \(\pi_1\) and \(\pi_2\):

	{\vspace{.5em} %
	\centering
	\scalebox{0.85}{
		\begin{prooftree}[small]
			\infer0[E1]{ \vdashJK \text{\pr|X1|} : \mat{m\\0\\0}}
			\infer0[E1]{ \vdashJK \text{\pr|X2|} : \mat{0\\m\\0}}
			\infer2[E3]{ \vdashJK \text{\pr|X1 + X2|} : \mat{p\\m\\0}}
		\end{prooftree}
	}
	\hfill
	\scalebox{0.85}{
		\begin{prooftree}[small]
			\infer0[E1]{ \vdashJK \text{\pr|X1|} : \mat{m\\0\\0}}
			\infer0[E1]{ \vdashJK \text{\pr|X2|} : \mat{0\\m\\0}}
			\infer2[E4]{ \vdashJK \text{\pr|X1 + X2|} : \mat{m\\p\\0}}
		\end{prooftree}
	}
	\hfill
	\scalebox{0.85}{
		\begin{prooftree}[small]
			\infer0[E2]{ \vdashJK \text{\pr|X1 + X2|} : \mat{w\\w\\0}}
		\end{prooftree}
	}
	}

	From \(\pi_0\), the derivation of \pr|loop X3{X2 = X1 + X2}| can be
	completed using A and L, but since L requires having only \(m\) coefficients
	on the diagonal, \(\pi_1\) cannot be used to complete the derivation,
	because of the \(p\) coefficient in a box below:

	\scalebox{0.85}{
		\begin{prooftree}[small]
			\hypo{}
			\ellipsis{\(\pi_0\)}{ \vdashJK \text{\pr|X1 + X2|} : \mat{p\\m\\0}}
			\infer1[A]{ \vdashJK \text{\pr|X2 = X1 + X2|} : \mat{m&p&0 \\ 0&m&0 \\0&0&m}}
			\infer1[L]{ \vdashJK \text{\pr|loop X3 \{X2 = X1 + X2\}|} : \mat{m&p&0 \\ 0&m&0 \\0&p&m}}
		\end{prooftree}
	}
	\hfill \begin{prooftree}[small]
		\hypo{}
		\ellipsis{\(\pi_1\)}{ \vdashJK \text{\pr|X1 + X2|} : \mat{m\\p\\0}}
		\infer1[A]{ \vdashJK \text{\pr|X2 = X1 + X2|}: \mat{m&m&0 \\ 0& \boxed{p} &0 \\0&0&m}}
	\end{prooftree}

	Similarly, using A after \(\pi_2\) gives a \(w\) coefficient on the diagonal and makes it impossible to use L, hence only one derivation for this program exists.
\end{example}

\subsection{Limitations and inefficiencies of the mwp analysis}

Even if the proof techniques are far from trivial, with only 9 rules and skipping over boolean expressions (observe that the condition \pr|b| has no impact in the rules I or W), the analysis is flexible and easy to carry---at least mathematically.
It also has inherent limitations: while the technique is sound, it is not complete and programs such as \href{https://statycc.github.io/pymwp/demo/#other_gcd.c}{greatest common divisor} fail to be assigned a matrix.
We will discuss in \autoref{ssec:lit-rev}, the benefits and originality of this analysis, but we would now like to stress how it is computationally inefficient, since the non-determinacy makes the analysis costly to carry and can lead to memory explosions.

Abstracting \autoref{ex:start}, one can see that the base case of non-determinism---\eg, to assign a vector to \pr|X1 $\star$ X2|--yields vectors %
\(\mat{p \\ m}\) (using E1 then E3), \(\mat{m \\ p}\) (using E1 then E4) and \(\mat{w \\ w}\) (using E2).
Since none of those vectors is less than the others, only two strategies are available to analyze a larger program containing \pr|X1 $\star$ X2|: either the derivations for this base case are considered one after the other, or they are all stored in memory at the same time.
Considering the derivations for the base case one after the other can lead to a time explosion, as a program of \(n\) lines can have \(3^n\) different derivations---as exemplified by \href{\hrefexplosion}{\texttt{explosion.c}}, a simple series of applications---and it is possible that only one of them can be completed, so all must be explored.
On the other hand, storing those three vectors and constructing all the matrices in parallel leads to a memory explosion: the analysis for two commands involving 6 variables, with 3 choices---that cannot be simplified as explained previously---would result in 9 matrices of size \(6 \times 6\), \ie, 324 coefficients.
All in all, a program of \(n\) lines with \(x\) different variables can require \(c_1^n\) different derivations, that can produce up to \((c_2 \times x)^2\) coefficients to store for some constants \(c_1\), \(c_2\).

Beyond inefficiency, there are additional limitations: while the analysis is naturally compositional, this feature is not leveraged in the original system; furthermore, an occurrence of non-polynomial flows in the matrix causes the analysis to simply stop, thus not capturing failure in a meaningful way. We will discuss our solutions to these deficiencies next.

\section{A deterministic, always-terminating, declension of the mwp analysis}

The problem of finding a derivation in the original calculus is in NP~\cite[Theorem 8.1]{Jones2009}.
But since all the non-determinism is in the rules to assigning a vector, the potentially exponential number of derivations are actually extremely similar.
Hence, instead of having the analysis stop when failing to establish a derivation and re-starting from scratch, storing the different vectors and constructing the derivation while keeping all the options open seems to be a better strategy, but, as we have seen, this causes a memory blow-up.
We address it by fine-tuning the internal machinery: to represent non-determinism, we let the matrices take as values either functions from choices to coefficients in \mwp or coefficients in \mwp, so that instead of mapping choices to derivations, all the derivations are represented by the same matrix that internalizes the different choices.
\autoref{ssec:choice-semi-ring} discusses this improvement, that results in a notable gain: the (unique) matrix produced for a program involving 6 variables, with 3 choices, is a \(6\times 6\) matrix that can be represented by 66 coefficients instead of the 324 we previously had---this is because 30 coefficients are \enquote{simple} values in \mwp, and 6 are functions from a set of choices \(\{0, 1, 2\}\) to values in \mwp, each represented with 6 coefficients.

For the choices that give coefficients fulfilling the side condition of L or W, the derivation can proceed as usual, but when a particular choice gives a coefficient that violates it, we decided against simply removing it.
Instead, to guarantee that all derivations always terminate, we mark that choice by indicating that it would not provide a polynomial bound.
This requires extending the \mwp semi-ring with a special value \(\infty\) that represents failure in a local way, marking non-polynomial flows, and is detailed in \autoref{ssec:infinity-semi-ring}.
As a by-product, this allows gaining fine-grained information on programs that \emph{do not} have polynomially bounded growth, since the precise dependencies that break this growth rate can be localized.

Taken together (\autoref{ssec:merge}), our improvements ensure that exactly one matrix will always be assigned to a program while carrying over the correctness of the original analysis.
We give in \autoref{fig:new-rules} the deterministic system we are introducing in full, but will gently introduce it though the remaining parts of this section: note that the rules A, C and I are unchanged, up to the fact that the matrices, sum and product are in a different semi-ring.

\begin{figure}
	\begin{subfigure}{\textwidth}
		\begin{centering}

			\begin{prooftree}[small]
				\infer[left label={\(\star\in\{+, -\}\)}]0[E\(^{\textsc{A}}\)]{\vdash \text{\pr|Xi $\star$ Xj|} : (0 \mapsto \{_{\text{\pr|i|}}^{m}, _{\text{\pr|j|}}^{p}\}) \oplus (1 \mapsto \{_{\text{\pr|i|}}^{p}, _{\text{\pr|j|}}^{m}\}) \oplus (2 \mapsto \{_{\text{\pr|i|}}^{w}, _{\text{\pr|j|}}^{w}\})}
			\end{prooftree}
			\\[1.2em]
			\begin{prooftree}[small]
				\infer0[E\(^{\textsc{M}}\)]{ \vdash \text{\pr|Xi * Xj|} : \{ _{\text{\pr|i|}}^{w}, _{\text{\pr|j|}}^{w} \}}
			\end{prooftree}
			\hspace{4em}
			\begin{prooftree}[small]
				\infer0[E\(^{\textsc{S}}\)]{ \vdash \text{\pr|Xi|} : \{ _{\text{\pr|i|}}^{m}\}}
			\end{prooftree}

			\caption{Rules for assigning vectors to expressions}
			\label{fig:new-rules-expressions}
		\end{centering}
	\end{subfigure}
	\begin{subfigure}{\textwidth}
		\begin{centering}
			\begin{prooftree}[small]
				\hypo{ \vdash \text{\pr{e}} : V}
				\infer1[A]{\vdash \text{\pr|Xj = e|} : \umat \xleftarrow{\text{\pr|j|}} V}
			\end{prooftree}
			\hfill
			\begin{prooftree}[small]
				\hypo{ \vdash \text{\pr{C1}} : M_1}
				\hypo{ \vdash \text{\pr{C2}} : M_2}
				\infer2[C]{\vdash \text{\pr{C1; C2}} : M_1 \otimes M_2}
			\end{prooftree}
			\hfill
			\begin{prooftree}[small]
				\hypo{ \vdash \text{\pr{C1}} : M_1}
				\hypo{ \vdash \text{\pr{C2}} : M_2}
				\infer2[I]{\vdash \text{\pr|if b then C1 $\text{\hspace{.3em}}$ else C2|} : M_1 \oplus M_2} \end{prooftree}
			\\[1.2em]
			\begin{prooftree}[small]
				\hypo{ \vdash \text{\pr|C|}: M }
				\infer1[L\(^{\infty}\)]{\vdash \text{\pr|loop Xl \{C\}|} : M^* \oplus \{_{j}^{\infty} \rightarrow j \mid M^*_{jj} \neq m\} \oplus \{_{\text{\pr|l|}}^{p}\rightarrow j \mid \exists i, M_{ij}^* = p\} } \end{prooftree}
			\\[1.2em]
			\begin{prooftree}[small]
				\hypo{ \vdash \text{\pr|C|}: M }
				\infer1[W\(^{\infty}\)]{\vdash \text{\pr|while b do \{C\}|} : M^* \oplus \{_{j}^{\infty} \rightarrow j \mid M^*_{jj} \neq m\} \oplus \{_{i}^{\infty}\rightarrow j \mid M_{ij}^* = p\} } \end{prooftree}
		\end{centering}
		\caption{Rules for assigning matrices to commands}
		\label{fig:new-rules-commands}
	\end{subfigure}
	\caption{Deterministic improved flow analysis rules}
	\label{fig:new-rules}
\end{figure}

\subsection{Internalizing non-determinism: the choice data flow semi-rings}
\label{ssec:choice-semi-ring}

Internalizing the choice requires altering the semi-ring used in the analysis: we want to replace the three vectors over \(\mwp\) that can be assigned to an expression by a single vector over \(\{0, 1, 2\} \to \mwp\) that captures the same three choices.
For a program needing to decide \(p\) times between the 3 available choices, this means replacing the \(3 \times p\) different matrices in \(\mathbb{M}(\mwp)\) by a single matrix in \(\mathbb{M} (\{0, 1, 2\}^p \to \mwp)\).
As proven in \autoref{sec:app:choice}, for any strong semi-ring \(\mathbb{S}\) and family of sets \((A_i)_{i=1,\dots,p}\), both \(A_i \to \mathbb{S}\) and \(\mathbb{M} (\prod_{i=1}^p A_i \to \mathbb{S})\) are semi-rings, using the usual cartesian product of sets, and there exists an isomorphism
\(
\mathbb{M} (\prod_{i=1}^p A_i \to \mathbb{S} ) \cong \prod_{i=1}^p A_i \to \mathbb{M} (\mathbb{S} )
\).
This dual nature of the semi-ring considered is useful:
\begin{itemize}
	\item the analysis will now assign an element \(M\) of \(\mathbb{M} (\prod_{i=1}^p A_i \to \mwp )\) to a program;
	\item representing \(M\) as an element of \(\prod_{i=1}^p A_i \to \mathbb{M} (\mwp )\) allows to use an \emph{assignment} \(\vec{a}=(a_1,\dots,a_p)\in \prod_{i=1}^p A_i\) to produce a matrix \(M[\vec{a}]\in \mathbb{M} (\mwp )\), recovering the mwp-flow that would have been computed by making the choices \(a_1,\dots, a_p\) in the derivation.
\end{itemize}

\begin{remark}
	As the unique degree of non-determinism to assign a matrix to commands is 3, our modification of the analysis flow consists simply of recording the different choices by letting \(A_i = \{0, 1, 2\}\) for all \(i=1,\dots,p\) where \(p\)	is the number of times a choice had to be taken.
	Starting with \autoref{composition}, function calls will require potentially different sets \(A_i\).
\end{remark}

\begin{notation}
	\label{notation-delta}
	In the following and in the implementation alike, we will denote a function \((a_1^0 \times \cdots \times a_p^0 \mapsto \alpha_0) + \cdots + (a_1^k \times \cdots \times a_p^k \mapsto \alpha_k)\) in \(A^p \to \mwp\) with \(\Card(A) = k\) by, omitting the product, \((\alpha_0\delta(a_1^0, 0) \cdots \delta(a_p^0, p) ) + \cdots + (\alpha_k\delta(a_1^k, n)\cdots \delta(a_p^k, p))\), with \(\delta(i, j) = m\) if the \(j\)th choice is \(i\), \(0\) otherwise.
	\autoref{ex:mixing} will justify and explain this choice.
\end{notation}

Our derivation system replaces the E3 and E4 rules with a single rule E\(^{\textsc{A}}\) (\enquote{additive}), and splits E2 in two exclusive rules, E\(^{\textsc{M}}\) for \enquote{multiplicative} and
E\(^{\textsc{S}}\) for \enquote{simple} (atomic) expressions---\autoref{lem:equi-expr} will prove how they are equivalent. %

\begin{example}
	We represent the vectors \(\mat{p\\m\\0}\),	\(\mat{m\\p\\0}\) and \(\mat{w\\w\\0}\) from \autoref{ex:start} with a single vector\(\mat{p \delta(0, 0) + m\delta(1, 0) + w \delta(2, 0) \\ m \delta(0, 0) + p \delta(1, 0) + w \delta(2, 0) \\ 0}\), that can be read as \(\mat{\{ 0 \mapsto p, 1 \mapsto m, 2 \mapsto w\} \\ \{0 \mapsto m, 1 \mapsto p, 2 \mapsto w\} \\ 0}\), where we write \(0\) for \(\{0 \mapsto 0, 1 \mapsto 0, 2 \mapsto 0\}\)\footnote{The implementation supports both coefficients from \mwp \emph{and} coefficients from \(\{0, 1, 2\}^p \to \mwp\), \cf \eg a simple assignment example \href{\hrefexassign}{\texttt{assign\_expression.c}}.}.
	Since in particular\footnote{This is a variant of \autoref{lem:semi-ring-iso} from \autoref{sec:app:choice}. While the latter lemma applies to algebras of square matrices, a similar result holds for rectangular matrices of a fixed size; the algebraic structure is no longer that of a semi-ring as rectangular matrices do not possess a proper multiplication, but the proof can be adapted to show the existence of an isomorphism of modules between the considered spaces.},
	\(\mathbb{M}(\{0, 1, 2\} \to\mwp) \cong \{0, 1, 2\} \to \mathbb{M}(\mwp)\), the obtained vector can be rewritten as	\(0 \mapsto \mat{p\\m\\0}, 1 \mapsto \mat{m\\p\\0}, 2 \mapsto \mat{w\\w\\0}\).
\end{example}

\subsection{Internalizing failure: de-correlating derivations and bounds}
\label{ssec:infinity-semi-ring}

The original analysis stops when detecting a non-polynomial flow, puts an end to the chosen strategy (\ie set of choices) and restarts from scratch with another one.
We adapt the rules so that every derivation can be completed even in the presence of non-polynomial flows, thanks to a new top element, \(\infty\), representing failure in a local way.

Ignoring our previous modification in this subsection, the semi-ring \(\mwp^\infty\) we need to consider is \((\mwp \cup \{\infty\}, 0, m, +^{\infty}, \times^{\infty})\), with \(\infty > \alpha\)
for all \(\alpha \in \mwp\), \(+^{\infty}=\max\) as before, and \(\alpha \times^{\infty} \beta = 0\) if \(\alpha, \beta \neq \infty\) and \(\alpha\) or \(\beta\) is \(0\), \(\max(\alpha, \beta)\) otherwise.
This different condition in the definition of \(\times^{\infty}\) ensures that once non-polynomial flows have been detected, they cannot be erased (as \(\infty \times^{\infty} 0 = \infty\), some additional details are discussed in \autoref{sec:app:partiality}).

The only cases where the original analysis may fail is if the side conditions of L or W (\autoref{fig:orig-rules}) are not met.
We replace those by L\(^\infty\) and W\(^\infty\) (\autoref{fig:new-rules}), which replace the problematic coefficients with \(\infty\), marking non-polynomial dependencies, and carry on the analysis.

\begin{example}
	The program from \autoref{ex:start} would now receive three derivations (omitting the one obtained from \(\pi_0\), as the resulting matrix is identical):

	{\small
	\begin{prooftree}[small]
		\hypo{}
		\ellipsis{\(\pi_1\)}{ \vdash \text{\pr|X1 + X2|} : \mat{m\\p\\0}}
		\infer1[A]{ \vdash \text{\pr|X2 = X1 + X2|}: \mat{m&m&0 \\ 0& p &0 \\0&0&m}}
		\infer1[L\(^\infty\)]{ \vdash \text{\pr|loop X3 \{X2 = X1 + X2\}|} : \mat{m&p&0 \\ 0&\infty & 0 \\ 0 & p & m}}
	\end{prooftree}
	\hfill \begin{prooftree}[small]
		\infer0[E2]{\vdash \text{\pr|X1 + X2|} : \mat{w \\ w \\0} }
		\infer1[A]{ \vdash \text{\pr|X2 = X1 + X2|}: \mat{m& w &0 \\ 0 & w &0 \\0 & 0 & m}}
		\infer1[L\(^\infty\)]{ \vdash \text{\pr|loop X3 \{X2 = X1 + X2\}|} : \mat{m & w &0 \\ 0 & \infty & 0 \\0 & 0& m }}
	\end{prooftree}
	}

	Of course, neither of those two derivations would yield polynomial bound---since they contain \(\infty\) coefficients---but it becomes possible to determine that the last one is \enquote{better}---since \(\mat{p \\ \infty \\ p} > \mat{w \\ \infty \\ 0}\)---and to observe how their \enquote{failure} would propagate in larger programs, possibly establishing that one fares better than the other in terms of non-polynomial growths.
\end{example}

\subsection{Merging the improvements: illustrations and proofs}
\label{ssec:merge}

We prove that our system captures the original system in the sense that set aside \(\infty\) coefficients, both systems agree (\autoref{lem:equi-expr}), but also that exactly one matrix is produced per program (\autoref{lem:det-term})---\ie that we can analyze as many programs as originally, and still be correct regarding the bounds.
Before doing so, we would like to give more specifics on our system, by combining the semi-rings and intuitions from the previous two subsections.
We have discussed our \enquote{axiomatic} (E\(^{\textsc{A}}\), E\(^{\textsc{M}}\), E\(^{\textsc{S}}\)) and \enquote{loop} rules (L\(^\infty\) and W\(^\infty\)), but remain to discuss the rules for assignment (A), \pr|if| (I) and composition (C)---which is where both improvements meet.
Mathematically speaking, adopting the semi-ring defined over matrices with coefficients in \(\{0, 1, 2\}^p \to \mwp \cup \{\infty\}\) is straightforward, and we simply write \(\oplus\) and \(\otimes\) the operations resulting from merging the two transformations.
We discuss in \autoref{sec:advantage} how, however, those operations are computationally costly and how we address this challenge.

\begin{example}
	\label{ex:mixing}
	Using our deterministic system presented in \autoref{fig:new-rules}, consider the following:
	\begin{center}\small
		\begin{prooftree}[small]
			\infer0[E\(^{\textsc{A}}\)]{\vdash \text{\pr|X1 + X2|} : V}
			\infer1[A]{\vdash \text{\pr|X1 = X1 + X2|} : 1 \xleftarrow{1} V}
			\infer0[E\(^{\textsc{A}}\)]{\vdash \text{\pr|X1 - X3|} : V'}
			\infer1[A]{\vdash \text{\pr|X1 = X1 - X3|} : 1 \xleftarrow{1} V'}
			\infer2[I]{\vdash \text{\pr|if b then \{X1 = X1 + X2\} else \{X1 = X1 - X3\}|} : (1 \xleftarrow{1} V) \oplus (1 \xleftarrow{1} V')}
		\end{prooftree}
	\end{center}
	with
	\(\begin{aligned}[t]
		V                                                                  & = 0 \mapsto \{_{\text{\pr|1|}}^{m}, _{\text{\pr|2|}}^{p}\} \oplus 1 \mapsto \{_{\text{\pr|1|}}^{p}, _{\text{\pr|2|}}^{m}\} \oplus 2 \mapsto \{_{\text{\pr|1|}}^{w}, _{\text{\pr|2|}}^{w}\}         \\
		V'                                                                 & = 0 \mapsto \{_{\text{\pr|1|}}^{m}, _{\text{\pr|3|}}^{p}\} \oplus 1 \mapsto \{_{\text{\pr|1|}}^{p}, _{\text{\pr|3|}}^{m}\} \oplus 2 \mapsto \{_{\text{\pr|1|}}^{w}, _{\text{\pr|3|}}^{w}\}         \\
		1 \xleftarrow{1} V                                                 & \cong \mat{(0 \mapsto m) \oplus (1 \mapsto p) \oplus (2 \mapsto w)                                                                                                                         & 0 & 0 \\ (0 \mapsto p) \oplus (1 \mapsto m) \oplus (2 \mapsto w) & m & 0 \\ 0 & 0 & m}= \mat{m \delta(0, 0) \oplus p \delta(1, 0) \oplus w \delta(2, 0) & 0 & 0 \\ p \delta(0, 0) \oplus m \delta(1, 0) \oplus w \delta(2, 0) & m & 0 \\ 0 & 0 & m} \\
		1 \xleftarrow{1} V'                                                & \cong \mat{(0 \mapsto m) \oplus (1 \mapsto p) \oplus (2 \mapsto w)                                                                                                                         & 0 & 0 \\ 0 & m & 0 \\ (0 \mapsto p) \oplus (1 \mapsto m) \oplus (2 \mapsto w) & 0 & m}
		= \mat{ m \delta(0, 1) \oplus p \delta(1, 1) \oplus w \delta(2, 1) & 0                                                                                                                                                                                          & 0     \\ 0 & m & 0 \\ p \delta(0, 1) \oplus m \delta(1, 1) \oplus w \delta(2, 1) & 0 & m}
	\end{aligned}\)

	Some care is needed to perform the addition for the I rule: the choices in the left and right branches are independent, so we must use coefficients in \(\{0, 1, 2\}^2 \to \mwp\) for the \(2^3\) choices.
	While the mapping notation would require to use positions to describe which choice is being refereed to, the \(\delta\) notation makes it immediate, as it encodes in the second value of \(\delta\) that two choices are considered, numbering the choice in the left branch \(0\).
	Hence we can sum the coefficients and obtain the matrix presented in the implementation as \href{\hrefeximproved}{\texttt{example7.c}}.
\end{example}

\begin{example}
	Our deterministic system now assigns to \pr|loop X3 {X2 = X1 + X2}| from \autoref{ex:start} the unique matrix
	\[
		\mat{m & (0 \mapsto p) \oplus (1 \mapsto m) \oplus (2 \mapsto w) &0 \\ 0 & (0 \mapsto m) \oplus (1 \mapsto \infty) \oplus (2 \mapsto \infty) & 0 \\ 0 & (0 \mapsto p) \oplus (1 \mapsto 0) \oplus (2 \mapsto 0) &m}
		=
		\mat{m & p \delta(0, 0) \oplus m \delta(1, 0) \oplus w \delta(2, 0) &0 \\ 0 & m \delta(0, 0) \oplus \infty \delta(1, 0) \oplus \infty \delta(2, 0) & 0 \\ 0 & p \delta (0, 0) \oplus 0 \delta (1,0) \oplus 0 \delta (2, 0) &m}
	\]
	where we observe that
	\begin{enumerate*}\item only one choice, one assignment, 0, gives a matrix without \(\infty\) coefficient, corresponding to the fact that, in the original system, only \(\pi_0\) could be used to complete the proof,
		\item the choice impacts the matrix locally, the coefficients being mostly the same, independently from the choice,
		\item the influence of \pr|X2| on itself is where possible non-polynomial growth rates lies, as the \(\infty\) coefficient are in the second column, second row.
	\end{enumerate*}

\end{example}

We are now in possession of all the material and intuitions needed to state the correspondence between our system and the original one of Jones and Kristiansen.

\begin{restatable}[Determinancy and termination]{theorem}{detterm}
	\label{lem:det-term}
	Given a program \(P\), there exists unique \(p \in \mathbb{N}\) and \(M \in \mathbb{M}(\{0,1,2\}^p\rightarrow\mwp^{\infty})\) such that \(\vdash P : M\).
\end{restatable}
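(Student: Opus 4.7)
The plan is to proceed by structural induction on the program $P$, simultaneously establishing uniqueness of the derivation, uniqueness of the number $p$ of choices it introduces, uniqueness of the matrix $M$, and termination of the derivation-building procedure. The heart of the argument is that the rules of \autoref{fig:new-rules} are syntax-directed: for every expression $e$ in the grammar of \autoref{def:lang} exactly one of $\text{E}^{\textsc{S}}$, $\text{E}^{\textsc{M}}$, $\text{E}^{\textsc{A}}$ applies (atomic variable, multiplication, additive), and for every command $C$ exactly one of A, C, I, $\text{L}^{\infty}$, $\text{W}^{\infty}$ applies. This already rules out any branching in the construction of a derivation, so uniqueness reduces to uniqueness of the ingredients at each node.

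For expressions I would first handle the base cases. Rules $\text{E}^{\textsc{S}}$ and $\text{E}^{\textsc{M}}$ produce vectors in $\mathbb{M}(\mwp)$ with no freshly introduced choice, hence live in $\{0,1,2\}^0 \to \mwp^{\infty}$ with $p=0$ and are plainly unique. Rule $\text{E}^{\textsc{A}}$ introduces exactly one new choice index, yielding a uniquely determined vector in $\{0,1,2\}^1 \to \mwp^{\infty}$. Then, passing to commands, rule A is purely functional in the vector it receives, so by the induction hypothesis on $e$ the resulting matrix is unique. For C and I the induction hypothesis gives unique $(p_1,M_1)$ and $(p_2,M_2)$; the total number of choices is $p_1+p_2$ (each sub-derivation carries its own independent indices, as \autoref{ex:mixing} already illustrates for I), and the operations $\otimes$ and $\oplus$ on $\mathbb{M}(\{0,1,2\}^{p_1+p_2}\to\mwp^{\infty})$ — which are well-defined by the lifting of $M_1,M_2$ along the isomorphism of \autoref{ssec:choice-semi-ring} — produce a unique matrix. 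For $\text{L}^{\infty}$ and $\text{W}^{\infty}$, the $p$ of the premise is inherited unchanged, and the conclusion is obtained by taking the closure $M^*$ and then overwriting certain coefficients by $\infty$ or $p$ according to purely syntactic side conditions on the entries of $M^*$; given unique $M$, this output matrix is unique.

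The genuine point that must be checked — and the step I expect to require the most care — is that the closure $M^*$ used in $\text{L}^{\infty}$ and $\text{W}^{\infty}$ is itself well-defined, so that the construction of the derivation actually terminates. Unlike in the original calculus, our enriched semi-ring $\mwp^{\infty}$ contains a top element that absorbs multiplication with $0$, so the sequence $\umat, \umat \oplus M, \umat \oplus M \oplus M^2, \dots$ is componentwise monotone in the finite totally-ordered set $\mwp^{\infty} = \{0,m,w,p,\infty\}$, pointwise on each of the finitely many assignments $\vec{a} \in \{0,1,2\}^p$. Hence each entry stabilises after finitely many steps, $M^*$ is well-defined as an element of $\mathbb{M}(\{0,1,2\}^p \to \mwp^{\infty})$, and it is the unique such element determined by $M$.

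Combining these observations, the structural induction terminates because the syntax tree of $P$ is finite, each inductive step performs only finitely many $\oplus$, $\otimes$, and closure operations on finite data, and every such operation produces a uniquely determined result. This yields a unique $p \in \mathbb{N}$ and a unique $M \in \mathbb{M}(\{0,1,2\}^p \to \mwp^{\infty})$ with $\vdash P : M$.
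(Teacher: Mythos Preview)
Your proof is correct and follows essentially the same route as the paper's: structural induction on \(P\), exploiting that the rules of \autoref{fig:new-rules} are syntax-directed so that exactly one rule applies at each syntactic form. If anything you are more careful than the paper on one point---the paper's proof does not explicitly argue that the closure \(M^{*}\) stabilises over the enriched semi-ring, whereas you do---while the paper in turn spells out the bookkeeping of resizing matrices when two sub-commands mention different sets of variables, which you leave implicit.
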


\begin{proof}
	The existence of the matrix is guaranteed by the completeness of the
	rules, as any program written in the syntax presented in
	\autoref{subsec:language} can be typed with the rules of
	\autoref{fig:new-rules}.
	The uniqueness of the matrix is given by the fact that no two rules can be applied to the same command.
	Details are provided in \autoref{app:omitted-proofs}.
\end{proof}

\begin{restatable}[Adequacy]{theorem}{equiexpr}
	\label{lem:equi-expr}
	If \(\vdash P:M\), then for all \(\vec{a}\in A^p\), \(\vdashJK P: M[\vec{a}]\) iff \(\infty \notin M[\vec{a}]\).
\end{restatable}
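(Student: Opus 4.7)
The proof proceeds by induction on the structure of the program $P$ (equivalently, on the unique derivation $\vdash P : M$ given by \autoref{lem:det-term}), showing both directions of the iff simultaneously. Throughout, I rely on the semi-ring isomorphism from \autoref{sec:app:choice}: evaluation at any assignment $\vec{a}$ is a semi-ring homomorphism, so $(M_1 \oplus M_2)[\vec{a}] = M_1[\vec{a}] \oplus M_2[\vec{a}]$, $(M_1 \otimes M_2)[\vec{a}] = M_1[\vec{a}] \otimes M_2[\vec{a}]$, and hence $M^{*}[\vec{a}] = (M[\vec{a}])^{*}$. This reduces each inductive step to a local comparison between the two rule systems at a single assignment.

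For the base cases on expressions, I split on the shape of $\text{e}$: rule E$^{\textsc{S}}$ on $\text{Xi}$ reproduces E1 with an empty choice space; rule E$^{\textsc{M}}$ on $\text{Xi} * \text{Xj}$ matches the only applicable JK rule, E2; and rule E$^{\textsc{A}}$ on $\text{Xi} \star \text{Xj}$ with $\star \in \{+,-\}$ yields a vector over $\{0,1,2\}$ whose evaluations at $0$, $1$, $2$ are exactly the three JK-derivable vectors from E3, E4, and E2. No $\infty$ is introduced at the expression level and every JK choice is reachable, so the iff holds trivially.

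For the inductive cases A, C, and I, the rules are syntactically identical across the two systems, so the claim follows immediately by applying the homomorphism properties to the premises and invoking the induction hypothesis. The interesting cases are L$^{\infty}$ and W$^{\infty}$: letting $\vdash \text{C} : M'$, the new rule produces $N = M'^{*} \oplus F$, where $F$ collects the $\infty$-flagging entries triggered when the original JK side condition is violated. Evaluating, $N[\vec{a}] = (M'[\vec{a}])^{*} \oplus F[\vec{a}]$, and membership of $\infty$ in $F[\vec{a}]$ is equivalent, entry by entry, to failure of the corresponding JK side condition on $M'[\vec{a}]$---because those conditions are stated pointwise on $M'^{*}$ and $(\cdot)^{*}$ commutes with evaluation. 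Hence $\infty \notin N[\vec{a}]$ iff the JK side condition holds at $\vec{a}$; when it does, the non-$F$ summand of $N[\vec{a}]$ reproduces the conclusion of JK's rule L (resp.\ W) verbatim, and by induction a JK derivation of C at $\vec{a}$ exists with matrix $M'[\vec{a}]$, yielding a JK derivation of the loop with matrix $N[\vec{a}]$. Conversely, a JK derivation of the loop forces the side condition to hold at $\vec{a}$, killing $F[\vec{a}]$, so no $\infty$ can appear.

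The main obstacle is bookkeeping the cumulative choice space $\prod_{i=1}^{p} A_i$: each sub-derivation contributes its own batch of choices, and these must be concatenated coherently so that evaluating the outer matrix at $\vec{a}$ amounts to evaluating each sub-matrix at the appropriate slice of $\vec{a}$. Making this precise---and checking that the $\delta$-notation of \autoref{notation-delta} behaves correctly under $\oplus$ and $\otimes$---is primarily a notational exercise once the semi-ring isomorphism is in hand, but it is where the bulk of the writing goes. The final iff on $\infty$ then follows routinely from how $\times^{\infty}$ propagates failure (cf.\ \autoref{sec:app:partiality}), ensuring a flag introduced locally by $F[\vec{a}]$ is never silently erased upstream by multiplication with $0$.
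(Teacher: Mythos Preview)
Your proof is correct and follows the same inductive strategy as the paper's own argument: both proceed by induction on the program structure, match the expression-level axioms against their JK counterparts, observe that A, C, and I are shared verbatim, and isolate L\(^{\infty}\)/W\(^{\infty}\) as the sole sources of \(\infty\). Your write-up is in fact more explicit than the paper's terse appendix version: you invoke the semi-ring isomorphism to justify that evaluation at \(\vec{a}\) commutes with \(\oplus\), \(\otimes\), and \((\cdot)^{*}\), and you spell out why the absorbing product \(\times^{\infty}\) guarantees that a failure flag introduced inside a subderivation cannot be erased by composition---points the paper leaves implicit.

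One minor caution on phrasing. Your claim that ``every JK choice is reachable'' at the expression level overstates the correspondence slightly. The JK system permits E2 on a bare variable (yielding \(\{^{w}_{i}\}\)) and permits E3/E4 with E2-derived premises, producing vectors that are \emph{not} of the form \(M[\vec{a}]\) for any assignment. The paper explicitly flags this as a ``small mismatch'' and dismisses it because the \(m\)-vector dominates. For the theorem as stated this is harmless: the forward direction (\(\vdashJK P:M[\vec{a}]\Rightarrow \infty\notin M[\vec{a}]\)) is immediate since JK matrices live in \(\mwp\), so only the backward direction carries content, and for that you need merely that each \(\infty\)-free \(M[\vec{a}]\) is \emph{some} JK derivation---not that every JK derivation is hit. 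Your argument already establishes this; just tighten the wording.
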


\begin{proof}
	The proof uses that \(P\) cannot be assigned a matrix in the
	original calculus iff the deterministic calculus introduce a
	\(\infty\) coefficient, and from the fact that both calculus
	coincide in all the other cases.
	Details are provided in \autoref{app:omitted-proofs}.
\end{proof}

\begin{corollary}[Soundness]
	\label{cor:main-result}
	If \(\vdash P: M\) and there exists \(\vec{a}\in A^p\) such that \(\infty \notin M[\vec{a}]\), then every value computed by \(P\) is bounded by a polynomial in the inputs.
\end{corollary}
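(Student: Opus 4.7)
The plan is to obtain the corollary by direct chaining of our Adequacy theorem with the original soundness result of Jones and Kristiansen. The statement packages two facts: (i) the deterministic calculus has produced a matrix $M$ for the program $P$, and (ii) there is at least one assignment $\vec{a}$ of the non-deterministic choices under which the corresponding mwp-matrix $M[\vec{a}]$ contains no $\infty$ coefficient. What we want is that values computed by $P$ admit polynomial bounds in the inputs. Neither of these is a statement about the original JK calculus directly, but via Adequacy the clean assignment $\vec{a}$ gives us a bridge into it.

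First, apply \autoref{lem:equi-expr} to the given $M$ and $\vec{a}$. Since by hypothesis $\infty \notin M[\vec{a}]$, the theorem furnishes a derivation $\vdashJK P : M[\vec{a}]$ in the original JK system of \autoref{fig:orig-rules}, with $M[\vec{a}] \in \mathbb{M}(\mwp)$ an ordinary mwp-matrix. At this stage the hypotheses needed to invoke the original soundness theorem of Jones-Kristiansen~\cite[Theorem 5.3]{Jones2009} are exactly met: a program $P$ in the imperative language of \autoref{def:lang} has been assigned an mwp-matrix by the original flow calculus.

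Applying that theorem then yields polynomial bounds on the values computed by $P$ in terms of the inputs, of the shape recalled after \autoref{fig:orig-rules} (\eg $\max(\vec{x}, p_1(\vec{y})) + p_2(\vec{z})$ with $p_1, p_2$ polynomials read off from the $m$-, $w$- and $p$-annotated entries of $M[\vec{a}]$). This is exactly the conclusion of the corollary.

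The only potentially subtle step is the first one, the appeal to \autoref{lem:equi-expr}, and there the content is really packaged already in Adequacy: the deterministic calculus agrees with JK on every assignment that avoids $\infty$. The chaining itself is mechanical. I expect no genuine obstacle beyond observing that the soundness guarantee provided by the original calculus transfers without modification, since we are evaluating the deterministic derivation at a specific $\vec{a}$ and thereby reducing to an object already understood by~\cite[Theorem 5.3]{Jones2009}.
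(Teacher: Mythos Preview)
Your proposal is correct and follows exactly the same approach as the paper: apply \autoref{lem:equi-expr} to the assignment $\vec{a}$ with $\infty \notin M[\vec{a}]$ to obtain $\vdashJK P : M[\vec{a}]$, then invoke the original soundness theorem~\cite[Theorem~5.3]{Jones2009}. The paper's own proof is a one-line remark to this effect, and your expanded version simply spells out the chaining.
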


\begin{proof}
	This is an immediate corollary of the original soundness theorem~\cite[Theorem~5.3]{Jones2009} and of \autoref{lem:equi-expr}.
\end{proof}

This proves that the two analyses coincide, when excluding \(\infty\), and that we can re-use the original proofs.
However, our alternative definition should be understood as an important improvement, as it enables a better proof-search strategy while optimizing the memory usage, and hence enables the implementation (\autoref{sec:implementation}).
It also lets the programmer gain more fine-grained feedback, and illustrates the flexibility of the analysis: the latter will also be demonstrated by the improvements we discuss in the next section.

\section{Extending and improving the analysis: functions and efficiency}
\label{composition}

To improve this analysis, one could try to extract tight bound, to certify it, or to port it to a compiler's intermediate representation.
Adding constant values is arguably immediate~\cite[p. 3]{Jones2009} but handling pointers, even if technically possible, would probably require significant work.
This illustrates at the same time the flexibility of the analysis, and the distance separating ICC-inspired techniques from their usage on actual programs.
We decided to narrow this gap along two axes: the first one consists of allowing function definitions and calls in our syntax.
It is arguably a small improvement, but illustrates nicely the compositionality of the analysis, and includes recursively defined functions.
The second extension intersects the theory and the implementation: it details how our semi-ring structure can be leveraged to maintain a tractable algorithm to compute costly operations on our matrices, and to separate the problem of deciding if a bound exists from computing its form.

\subsection{Leveraging compositionality to analyze function calls}

Thanks to its compositionality, this analysis can easily integrate functions and procedures, by re-using the matrix and choices of a program implementing the function called.
We begin by adding to the syntax the possibility of defining multiple functions and calling them:

\begin{definition}[Functions]
	\label{def:extended-lang}
	Letting \pr|R| (\resp \pr|f|) range over variables (\resp function names), we add \emph{function calls}\footnote{Function calls that discard the output---procedures---could also be dealt with easily, but are vacuous in our effect-free, in particular pointer-free, language} to the commands (\autoref{def:lang}) and allow \emph{function	declarations}:

	\begin{center}
		\pr|C| \(\coloneqq\)\pr|Xi = f(X1, $\hdots$, Xn)| \hfill \pr|F| \(\coloneqq\) \pr|f(X1, $\hdots$, Xn){C; return R}|
	\end{center}
	In a function declaration, \pr|f(X1, $\hdots$, Xn)| is called the \emph{header}, and the \emph{body} is simply \pr|C| (\ie, \pr|return R| is not part of the body).
	A \emph{program} is now a series of function declarations such that all the function calls refer to previously declared functions---we deal with recursive calls in \autoref{ssec:recursion}---and a \emph{chunk} is a series of commands.
\end{definition}

Now, given a function declaration computing \(f\), we can obtain the matrix \(M_f\) by analyzing the body of \(f\) as previously done.
It is then possible to store the assignments \(\vec{a}_0, \hdots, \vec{a}_k\), for which no \(\infty\) coefficients appear\footnote{Allowing \(\infty\) coefficients would not change the method described nor its results, but it does not seem relevant to allow calling functions that are not polynomially bounded.}, and to project the resulting matrices to only keep the vector at \pr|R| that provides quantitative information about all the possible dependencies of the output variable \pr|R| \wrt input values, possibly merging choices leading to the same result.
After this, we are left with a family \((M_f[\vec{a}_0])\vert_{\text{\pr|R|}}, \hdots, (M_f[\vec{a}_k])\vert_{\text{\pr|R|}}\) of vectors---as the syntax here is restricted to functions with a single output value, even if accommodating multiple return values would be dealt with the same way---that we can re-use when calling the function.

The analysis of the command calling \(f\) is then dealt with the F rule below: %

\begin{center}
	\begin{prooftree}[small]
		\infer0[F]{\vdash \text{\pr|Xi = F(X1, $\hdots$, Xn)|} : 1 \xleftarrow{\text{\pr|i|}} (((M_f[\vec{a}_0])\vert_{\text{\pr|R|}})\delta(0, c) \oplus \dots \oplus ((M_f[\vec{a}_k])\vert_{\text{\pr|R|}})\delta(k, c))}
	\end{prooftree}
\end{center}

This rule introduces a choice \(c\) over \(k\) possible matrices, and it is possible that \(k \neq 3\), but this is not an issue, since our semi-ring construction can accommodate any set of choice \(A\).

\begin{example}
	\label{ex:function-call}
	Consider the following two programs \(Q\) and \(P\):

	{\centering

	\begin{tabular}{l c l}
		\(Q = \)
		\begin{lstlisting}[language=C, backgroundcolor = \color{white}]
int f(X1, X2){
 while b do {X2=X1+X1};
 return X2;
}
 \end{lstlisting}
		 & \hspace{4em}
		 &
		\(P = \)
		\begin{lstlisting}[language=C, backgroundcolor = \color{white}]
int foo(X1, X2){
 X2=X1+X1;
 X1=f(X2, X2);
}
 \end{lstlisting}
	\end{tabular}

	}

	We first have \(\vdash \text{\pr|X2 = X1 + X1|}:V\) for \(V = \mat{m \hspace{.2em} & p \delta(0, 0) \oplus p \delta(1, 0) \oplus w \delta(2, 0) \\0 & 0}\),
	and since \(V^{\ast} = \mat{m \hspace{.2em} & p \delta(0, 0) \oplus p \delta(1, 0) \oplus w \delta(2, 0) \\0 & m}\), applying W\(^\infty\) gives \(\vdash \text{\pr|Q|} : \mat{m \hspace{.2em} & \infty \delta(0, 0) \oplus \infty \delta(1, 0) \oplus w \delta(2, 0) \\ 0 \hspace{.2em} & m}\).
	Noting that only one choice gives an \(\infty\)-free matrix, we can now carry on the analysis of \pr|P|:

	{
	\centering

	\begin{prooftree}[small]
		\hypo{}
		\ellipsis{}{\vdash \text{\pr|X2 = X1 + X1|} : V}
		\infer0[F]{\vdash \text{\pr|X1 = f(X2, X2)|} : 1 \xleftarrow{\text{\pr|1|}} (\mat{w \\ m} \delta(0, c))}
		\infer2[C]{\vdash P : V \otimes 1 \xleftarrow{\text{\pr|1|}} (\mat{w \\ m} \delta(0, c))}
	\end{prooftree}

	}

	In this particular case, the \(c\) choice can be discarded, since only one option is available.%
\end{example}

Now, to prove that the F rule faithfully extends the analysis (\autoref{thm:analysis-function}), \ie preserves \autoref{cor:main-result}, we prove that the analysis of the program \enquote{inlining} the function call---as defined below---is, up to some bureaucratic variable manipulation and ignoring some \(\infty\) coefficients, the same as the analysis resulting from using our rule.
Intuitively, this mechanism provides the expected result because the choices in the function \emph{do not} affect the program calling it, and because their sets of variables are disjoint---except for the return variable.

\begin{definition}[In-lining function calls]
	\label{def:in-line}
	Let \(P\) be a chunk containing a call to the function \(f\), and \(F\) be the function declaration computing the function \(f\).
	The \emph{context} \(P[\cdot]\), a chunk containing a slot \([\cdot]\), is obtained by replacing in \(P\) the function call \pr|Xi=f(X1, $\hdots$, Xn)|, with \pr|X'1=X1; $\dots$; X'n=Xn; $[\cdot]$ Xi=R|, for \pr|R|, \pr|X'1|, \(\hdots\), \pr|X'n| fresh variables added to the header containing the chunk.

	The chunk \(\tilde{F}\) is obtained from the body of \(F\) by renaming the input variables to \pr|X'1|, \(\hdots\), \pr|X'n|, and the variable returned by \(F\) to \pr|R|.
	The code \(P[F]\) is finally obtained by computing the chunk \(\tilde{F}\), and inserting it in place of the symbol \([\cdot]\) in \(P[\cdot]\).
\end{definition}

That \(P\) and \(P[F]\) have, at the end of their executions, the same values stored in the variables of \(P\) is straightforward in our imperative programming language.

\begin{example}
	\label{ex:composition}
	The in-lining of \(Q\) in \(P\) from \autoref{ex:function-call} would give the following chunck \(\tilde{Q}\) and context \(P[\cdot]\), \(P[Q]\) being obtained by replacing in the latter \([\cdot]\) with the former:

	{\centering

	\begin{tabular}{l l }
		\(\tilde{Q} =\) \begin{lstlisting}[language=C, backgroundcolor = \color{white}]
while b do {R=X'1+X'1};
 \end{lstlisting}
		\hfill &
		\(P[\cdot] = \) \begin{lstlisting}[language=C, backgroundcolor = \color{white}]
int foo(X1, X2, X'1, R){
	X2=X1+X1;
	X'1=X2;
	|$[\cdot]$|
	X1=R;
}
\end{lstlisting}
	\end{tabular}
	}

	The analysis of \pr|P| (excluding the function call) and \pr|Q| is implemented at \href{https://statycc.github.io/pymwp/demo/\#implementation_paper_example15_a.c}{\texttt{example15a.c}}, and of \pr|P[Q]| at \href{https://statycc.github.io/pymwp/demo/\#implementation_paper_example15_b.c}{\texttt{example15b.c}}: this latter diverges with \autoref{ex:function-call} only up to projection and \(\infty\)-coefficients that are removed by F but not when in-lining the function call.
\end{example}

Now, we need to prove that the matrices \(M(P)\)---obtained by analyzing \(P\) and using the F rule for \pr|Xi=f(X1, $\hdots$, Xn);|---and \(M(P[F])\)---obtained by analyzing the inlined \(P[F]\)---are the same.
However, to avoid conflict with the variables and to project the matrices on the relevant values, some bureaucracy is needed: we write \(\Pi_P(M(P[F]))\) (\resp \((1-\Pi_P)(M(P[F]))\)) the projection of \(M(P[F])\) onto the variables in (\resp \emph{not} in) \(P\).
Some non-deterministic choices may appear within the (modified) chunk \(\tilde{F}\) inside \(P[F]\), \ie
\begin{itemize}
	\item the coefficients of \(M(P)\) are elements of the semi-ring \(\prod_{i=1}^{p+1} A_i \rightarrow \mathbb{M}(\mwp)\), with one particular choice corresponding to the F rule---we write the corresponding index \(i_0\);
	\item the coefficients of \(M(P[F])\) are elements of the semi-ring
	      \(\prod_{i=1}^{p+k} B_i\rightarrow \mathbb{M}(\mwp)\), where
	      \(k\) choices are made within the chunk \(\tilde{F}\)---we write the
	      corresponding indexes \(j_1,j_2,\dots,j_k\) (note these are in fact
	      consecutive indexes).
\end{itemize}
We note \(\pi: \{1,\dots,p+k\} \rightarrow \{1,\dots,p+1\}\) the projection of
the choices in \(P[F]\) onto the corresponding choices in \(P\), \ie
\(
\pi(j)=\left\{
\begin{tabular}{ll}
	\(j\)     & \text{ if \(j<j_1\)}               \\
	\(i_0\)   & \text{ if \(j_1 \leqslant j<j_k\)} \\
	\(j-k+1\) & \text{ if \(j_k< j\)}
\end{tabular}
\right.
\).
We note that each matrix used as axiom in the function call corresponds to a
specific assignment on indexes \(j_1,\dots,j_k\). We write
\(\Psi: A_{i_0}\rightarrow \prod_{i=j_1}^{j_k} B_i\) the corresponding
injection, extended to
\(\bar{\Psi}: \prod_{i=1}^{p+1} A_{i}\rightarrow \prod_{i=0}^{p+k} B_i\)
straightforwardly.

\begin{theorem}
	\label{thm:analysis-function}
	For all \(\vec{a}\) in \(\prod_{i=1}^{p+1} A_i\),
	\( (M(P))[\vec{a}]=(1-\Pi_P)(M(P[F]))[\bar{\Psi}(\vec{a})] \), and for all \(\beta\) in \(\prod_{i=0}^{p+k} B_i\) not in the image of
	\(\bar{\Psi}\), \((1-\Pi_P)(M(P[F])[\beta])\)
	contains \(\infty\). \end{theorem}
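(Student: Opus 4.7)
The plan is to argue by structural induction on the command $P$, following the derivation rules of \autoref{fig:new-rules}. For any rule other than F, inlining is confined to the slot of the context $P[\cdot]$, and the surrounding derivation tree is identical in $P$ and $P[F]$. The induction hypothesis on the immediate sub-derivations, together with the fact that $\oplus$ and $\otimes$ act pointwise on the choice factors, then lifts the required equality to the full matrix; on the choice indices coming from outside $\tilde{F}$, both $\bar{\Psi}$ and $\pi$ are the identity, so no bookkeeping is needed at these steps.

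The substantive case is the application of the F rule to a call $X_i = f(X_1, \ldots, X_n)$. On the $P$ side, F posts the axiom $\umat \xleftarrow{i} \bigoplus_{j=0}^{k}(M_f[\vec{a}_j])\vert_R \,\delta(j,c)$. On the $P[F]$ side, one analyzes, by repeated application of C and A, the inlined chunk $X'_1 = X_1;\ldots;X'_n = X_n;\tilde{F}; X_i = R$. Since $\tilde{F}$ is the body of $F$ with a variable renaming, its matrix is $M_f$ up to the corresponding relabeling of rows and columns, and its internal choice indices are exactly those singled out by $\Psi$. Two elementary computations then remain: composition with the matrix of $X'_\ell = X_\ell$ copies the dependencies of $X_\ell$ into the fresh column for $X'_\ell$ before the call, and composition with $X_i = R$ transports the $R$-column of $M_f$ back into the $X_i$ column after the call. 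Specializing at $\bar{\Psi}(\vec{a})$ and restricting to the variables of $P$ via $(1-\Pi_P)$ yields precisely $(M_f[\vec{a}_j])\vert_R$ placed in column $i$, which matches the F-rule axiom.

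For the second claim, fix $\beta$ outside the image of $\bar{\Psi}$. By construction of $\bar{\Psi}$, the restriction of $\beta$ to the internal indices $j_1, \ldots, j_k$ corresponds to a choice $\vec{b}$ for the body of $F$ that was discarded during preprocessing, so $M_f[\vec{b}]$ contains an $\infty$ coefficient. One then uses that $\infty$ is absorbing under $\times^\infty$ (\autoref{ssec:infinity-semi-ring}) to show that this coefficient survives composition with the surrounding assignment matrices. The main obstacle is to verify that the surviving $\infty$ ends up on a coordinate belonging to a variable of $P$, and not exclusively on the fresh variables $X'_1, \ldots, X'_n, R$, where the projection $(1-\Pi_P)$ would erase it. I expect this to follow from the specific form of the L$^\infty$ and W$^\infty$ rules---which plant $\infty$ on the diagonal, where closure and subsequent composition propagate it---together with the fact that the final $X_i = R$ assignment folds the $R$-row into the $X_i$ row while the initial copy-in assignments connect the fresh input variables to the variables of $P$. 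Making this bookkeeping fully rigorous, and in particular ensuring that no discarded $\vec{b}$ produces an $\infty$-pattern confined to columns killed by $(1-\Pi_P)$, is the technically delicate step of the proof; the first claim, by contrast, is conceptually clean but notationally heavy.
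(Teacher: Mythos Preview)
Your approach coincides with the paper's, which is extremely terse: it observes that by compositionality of the C rule (sequential composition yields a product of matrices), it suffices to treat the case where \(P\) is the single command \(X_i = f(X_1,\ldots,X_n)\), and then declares the remaining verification a ``straightforward, though tedious (due to keeping track of all indices), computation.'' Your structural induction is simply a more explicit rendering of that same reduction, and your base-case sketch of the F rule versus the inlined chunk \(X'_1=X_1;\ldots;\tilde F;X_i=R\) is exactly the computation the paper gestures at.

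You in fact go further than the paper on the second claim, isolating the worry that an \(\infty\) generated inside \(\tilde F\) might sit exclusively on fresh-variable coordinates and be erased by the projection. The paper does not engage with this at all; it is subsumed under ``tedious.'' One caution on notation: the paper defines \((1-\Pi_P)\) as the projection onto the variables \emph{not} in \(P\), which is the opposite of how you use it in both parts of your argument. Under the paper's reading your \(\infty\)-survival concern largely dissolves (the fresh coordinates are precisely what \((1-\Pi_P)\) retains), but then the first equality becomes dimensionally odd, since \(M(P)\) ranges over \(P\)'s variables. This appears to be a notational inconsistency in the statement rather than a defect in your reasoning; either way, your sketch is at least as complete as the paper's own proof.
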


\begin{proof}
	It is sufficient to prove it for the simplest chunk \(P\) containing only one command \pr|Xi = f(X1, $\hdots$, Xn)|.
	This comes from the compositional nature of the analysis, as a sequence of commands is assigned the product of the matrices of each individual command.
	Then, checking the theorem in this case is a straightforward, though tedious (due to keeping track of all indices),	computation.
\end{proof}

\subsection{Integrating recursive calls, the easy way}
\label{ssec:recursion}

The question of dealing with self-referential, or recursive, calls, naturally arises when extending to function calls.
It turns out that our approach makes such cases easy to handle.

A program implementing a function \pr|rec| calling itself cannot use the F rule presented above as is, since the result of the analysis of \pr|rec| is precisely what we are trying to establish.
However, if \pr|rec| takes two input variables \pr|X1| and \pr|X2| and its return value is assigned to a third variable \pr|X3|, then we already know that the vector at \(3\) will need to be replaced by the vector capturing the dependency between \pr|X1|, \pr|X2|, and the return variable of \pr|rec| (which we will take to be \pr|X3| in our example).
The solution consists in replacing the actual values in this vector by variables \(\alpha\), \(\beta\) ranging over values in \(\mwp^{\infty}\), terminating the analysis with those variables, and then to resolve the equation---which is easy given the small size of the \(\mwp^{\infty}\) semiring.

As an example\footnote{Where we use variables that are not parameters, following \autoref{int-var}, and where our recursive call does not terminate: we are focusing on growth rates and not on termination, and keep the example compact.
}, consider the following program and compute the corresponding matrix:
\begin{minipage}[t]{0.3\textwidth}
	{\small
		\begin{lstlisting}[language=C, backgroundcolor = \color{white}]
int rec(X1, X2){
 X1 = X1 + X2;
 X3 = rec(X1, X2);
 return X3;
}
 \end{lstlisting}
	}
\end{minipage}
\hfill
\begin{minipage}[t]{0.65\textwidth}
	{\small
		\begin{align*}
			  & \mat{m \delta(0, 0) \oplus p \delta(1, 0) \oplus w \delta(2, 0) \quad & 0 \quad & 0                                                                               \\ p \delta(0, 0) \oplus m \delta(1, 0) \oplus w \delta(2, 0) \quad & m \quad & 0 \\ 0 \quad & 0 \quad & m}
			\otimes
			1 \xleftarrow{3} \mat{\alpha                                                                                                                                          \\ \beta \\0}\\
			= & \mat{m \delta(0, 0) \oplus p \delta(1, 0) \oplus w \delta(2, 0) \quad & 0 \quad & \alpha m \delta(0, 0) \oplus \alpha p \delta(1, 0) \oplus \alpha w \delta(2, 0) \\ p \delta(0, 0) \oplus m \delta(1, 0) \oplus w \delta(2, 0) \quad & m \quad & \alpha p \delta(0, 0) \oplus \alpha m \delta(1, 0) \oplus \alpha w \delta(2, 0) \oplus \beta \\ 0 \quad & 0 \quad & 0 }
		\end{align*}
	}
\end{minipage}

Using the assignments 0, 1 and 2 gives \(\mat{m & 0 & \alpha m \\ p & m & \alpha p \oplus \beta \\ 0 & 0 & 0}\), \(\mat{p & 0 & \alpha p \\ m & m & \alpha m \oplus \beta \\ 0 & 0 & 0}\) and \(\mat{w & 0 & \alpha w \\ w & m & \alpha w \oplus \beta \\ 0 & 0 & 0} \), and since the third vector should be equal to \(\mat{\alpha \\ \beta \\0}\), this gives three systems of equations:
\begin{align*}
	\left\{\begin{array}{rcl} \alpha m & = & \alpha \\ \alpha p \oplus \beta & = & \beta \end{array}\right.
	 &  &
	\left\{\begin{array}{rcl} \alpha p & = & \alpha \\ \alpha m \oplus \beta & = & \beta \end{array}\right.
	 &  &
	\left\{\begin{array}{rcl} \alpha w & = & \alpha \\ \alpha w \oplus \beta & = & \beta \end{array}\right.
\end{align*}

The smaller solution to the first (\resp second, third) equational system is \(\{\alpha=m; \beta=p\}\) (\resp \(\{ \alpha= p; \beta=p\}\) , \(\{\alpha=w; \beta=w\}\)), and as a consequence, we find two meaningful solutions (all others being larger than those): \(\mat{m \\ p \\ 0}\) and \(\mat{w \\ w \\ 0}\).%

\subsection{Taking advantage of polynomial structure to compute efficiently}
\label{sec:advantage}

Ensuring that the analysis is tractable is an important part of our contribution.
For a program accepting \(n\) different derivations and having \(k\) different derivations that cannot be completed, the original flow calculus must run at most \(k + 1\) times to find \emph{one} derivation, while our analysis outputs the \(k + n\) different derivations in one run, and then sorts them---as discussed next---by listing all the evaluations and looking for \(\infty\) values.
In this task, the C rule, that allows building programs from commands, is obviously crucial and consists simply in multiplying two matrices: however, since we are internalizing the choices, those matrices contain a mixture of functions from choices to coefficients in \(\mwp^{\infty}\) and of coefficients in \(\mwp\).
Multiplying such matrices is more costly, but also essential: an 8-line program such as \href{\hrefexplosiondemo}{\texttt{explosion.c}} requires to multiply elements of its matrix 34,992 times\footnote{The need to optimize functions is made even more obvious when we discuss benchmarking in \autoref{sec:benchmarks}.}.
This forces to represent and manipulate the elements of \(\prod_{i=1}^{p} A_i \rightarrow \mathbb{M}(\mwp)\)---setting aside \(\infty\) coefficients for a moment---cleverly: simple comparison showed that the improved algorithm presented below made the analysis roughly \emph{five times} faster (\autoref{app:sec:comparison}).

As discussed in \autoref{notation-delta}, elements of this semi-ring are represented as \emph{polynomials} \wrt the generating set given by the functions \(\delta(i,j): \prod_{i=1}^{p} A_i \rightarrow {\mwp}\) defined by
\(\delta(i,j)(a_1,\dots,a_p)=m\) if \(a_j=i\) and \(\delta(i,j)(a_1,\dots,a_p)=0\) otherwise, \ie an element of \(\prod_{i=1}^{p} A_i \rightarrow {\mwp}\) is represented as a polynomial \(\sum_{i=1}^{n} \alpha_i \prod_{j=1}^{k_i}\delta(a_{i,j},b_{i,j})\) with \(\alpha_i \in {\mwp}\).

This basis has an important property: the \emph{monomials} \(\alpha_i \prod_{j=1}^{k_i}\delta(a_{i,j},b_{i,j})\) in a polynomial can be ordered so that the product with another monomial is ordered, \ie if \(\alpha \leqslant \beta\) and both \(\alpha \times \gamma\) and \(\beta \times \gamma\) are non-zero, then \(\alpha \times \gamma \leqslant \beta \times \gamma\).
This order is leveraged to obtain efficient algorithms, similar to what is done using Gröbner bases for computation of standard polynomials~\cite{vanderHoeven2019}.
For instance, the algorithm for \href{\hrefpolymulti}{multiplication of polynomials} uses this property to compute the product of an ordered polynomial \(P\) with \(\sum_{i=1}^{n} \alpha_i \prod_{j=1}^{k_i}\delta(a_{i,j},b_{i,j})\):
\begin{enumerate}
	\item compute the products \(P_i= P\times \alpha_i \prod_{j=1}^{k_i}\delta(a_{i,j},b_{i,j})\) for all \(i\);
	\item compare and order a list \(L\) of all the first elements of those polynomials;
	      \item\label{algostep} append the smallest element to the result and remove it from the corresponding \(P_i\);
	\item insert the (new) first element of \(P_i\) to the list \(L\) if it exists;
	\item if \(L\) is non-empty, go back to step \ref{algostep}.
\end{enumerate}

When adding or multiplying polynomials, which consist of monomials, we check if a monomial is contained or included by another, and exclude all redundant cases (\cf \href{\hrefcontains}{\texttt{contains}} or
\href{\hrefincludes}{\texttt{includes}}).
This is also done when inserting monomials. Thus we keep polynomials free of implementation choices that we would otherwise have to handle during evaluation.

\subsection{Deciding the existence of a bound faster thanks to delta graphs}
\label{sec:delta-graph}

Adopting the \(\prod_{i=1}^{p} A_i \rightarrow {\mwp}^{\infty}\) semi-ring allows completing all derivations simultaneously, but remains to determine if there exists an assignment \(\vec{a}\in \prod_{i=1}^{p} A_i\) \st the resulting matrix is \(\infty\)-free, to decide whenever a program accepts a polynomial bound: this is the \emph{evaluation} step.
Despite the optimizations detailed above that simplifies the task, this phase remains particularly costly, since the number of assignment grows exponentially \wrt the number of choice, which is linear in the number of variables.
While this step is necessary (in one form or another) if one wishes to produce the actual mwp matrices certifying polynomial bounds, we implemented a specific data structure to keep track of assignments resulting in \(\infty\) coefficients on the fly, thus allowing the analysis to provide a qualitative answer quickly.
This section details how those \emph{delta graphs} allow to immediately determines whenever a polynomial bound exists without having to compute the corresponding matrix, something that was not possible in the original, non-deterministic, calculus.

A delta graph is a graph whose vertices are monomials. The graph is populated during the analysis by adding those monomials that appear with an infinite coefficient---\ie possible choices leading to \(\infty\) in the resulting matrix.
This graph is structured in layers: each layer corresponds to the size of the monomials (the number of deltas) it contains.
The intuition is that a monomial---or rather a list of deltas \(\delta(\_,\_)\)--- defines a subset of the space \(\prod_{i=1}^{p} A_i\); the less deltas in the monomial, the greater the subspace represented\footnote{Our intuitions here come from the standard topological structure of spaces of infinite sequences, where such a monomial represents a \enquote{cylinder set}, \ie an element of the standard basis for open sets.}.
As we populate the delta graph, we create edges within a given layer to keep track of differences between monomials: we add an
edge labeled \(i\) between two monomials if and only if they differ only on one
delta \(\delta(\_,i)\) (\ie one is obtained from the other by replacing the
first index of \(\delta(\_,i)\)). This is used to implement a \href{\hrefdgfusion}{\texttt{fusion}} method on delta graphs, which simplifies the structure: as soon as a monomial \(m\) in layer \(n\) has
\(\mathrm{Card}(A_i)-1\) outgoing edges labelled \(i\), we can remove all these monomials and insert a shorter monomial in layer \(n-1\), obtained from \(m\) by simply removing \(\delta(\_,i)\).
This implements the fact that \(\sum_{k=0}^{\mathrm{Card}(A_i)-1} m \delta(k,j) = m\).

Now, remember the delta graph represents the subspace of assignments for which an \(\infty\) appears.
If at some point the delta graph is completely simplified (\ie \enquote{fusions} to the graph with a unique monomial consisting in an empty list of \(\delta(\_,\_)\)), it means the whole space of assignments is
represented and no mwp-bounds can be found.
On the contrary, if the analysis ends with a delta graph different from the completely simplified one, at least one assignment exists for which no infinite coefficients appear, and therefore at least one mwp-bound exists.
This allows to answer the question \enquote{Is there at least one mwp-bound?} \emph{without actually computing said bounds}.
Based on the information collected in the delta graph and the matrix with polynomial coefficients, one can however recover all possible matrix assignments by going through all possible valuations.

This last part is \href{https://github.com/statycc/pymwp/blob/a39fe9a8cefa4be5a93380d66d1cb8162bb0ed01/pymwp/delta_iter.py}{implemented} with a specific iterator that leverages the information collected in the delta graph to skip large sets of valuations in a single step.
For instance, suppose the monomial \(\delta(1,1)\) lies in the delta graph---\ie that an infinite coefficient will be reached if the second index is equal to \(1\).
When asked the valuation after \((0,0,2,2)\) (and supposing that \(\mathrm{Card}(A_i)=3\) for all \(i\)), our \href{https://github.com/statycc/pymwp/blob/a39fe9a8cefa4be5a93380d66d1cb8162bb0ed01/pymwp/delta_iter.py#L36}{\texttt{delta\_iterator}} will jump directly to \((0,2,0,0)\), skipping all intermediate valuation of the form \((0,1,a,b)\) in a single step.
Similarly, it will jump from \((1,0,2,2)\) to \((1,2,0,0)\), again skipping several valuations at a time, providing a faster analysis.
Note that the implementation required care, to correctly jump when given additional informations from the delta graph, \eg to produce \((2,0,1,0)\) as the successor of \((0,0,2,2)\) if \(\delta(0,0)\), \(\delta(1,1)\) and \(\delta(0,2)\) all belong to the delta graph.

\section{Implementing, testing and comparing the analysis}
\label{sec:implementation}

Demonstrating the implementability of the improved and extended mwp-bounds analysis requires an implementation.
Our open-source solution, packaged through \href{https://pypi.org/project/pymwp/}{Python Package Index (PyPI)} as \href{https://pypi.org/project/pymwp/}{\texttt{pymwp}}, is a standalone command line tool, written in \texttt{Python}, that automatically performs growth-rate analysis on programs written in a \href{\hreffeatlist}{subset} of the \texttt{C} programming language.
For programs that pass the analysis, it produces a matrix corresponding to the input program and a list of valid derivation choices; and for programs that do not have polynomial bounds, it reports infinity. Our motivation for choosing \texttt{C} as the language of analysis resulted from its central role and similarity with the original \texttt{while} language. \texttt{Python} was an ideal choice for the implementation because of its plasticity, collection of libraries, and because it allowed partial reuse of a previous flow analysis tool~\cite{lqicm,Moyen2017,Moyen2017b}. The source code is available on \href{https://github.com/statycc/pymwp/}{Github}, along with an \href{https://statycc.github.io/pymwp/demo/}{online demo}, and detailed \href{https://statycc.github.io/pymwp/}{documentation}~\cite{pymwp_doc} describing its current supported features and functionality.
We now discuss how we tested and assesed it, and how it compares (or, rather \emph{does not} compare) to other similar approaches.

\subsection{Experimental evaluation}
\label{sec:benchmarks}

We allocated extensive focus and effort on \href{https://github.com/statycc/pymwp/tree/e59aeca6f690c5768adad360523525fb63a908ea/tests}{testing} and \href{https://github.com/statycc/pymwp/actions/workflows/profile.yaml}{profiling} our implementation, to ensure the correctness and efficiency of the analysis, and with the terminal objective of obtaining a usable tool. \href{https://github.com/statycc/pymwp/tree/main/c_files}{The test suite} includes 42 \texttt{C} programs, carefully designed to exercise different aspects of the analysis, ranging \href{https://github.com/statycc/pymwp/tree/8cf20f6157a8dfc3ce1593b766c31b1d49af4f77/c_files/basics}{from basic derivations}, to ones producing worst-case behavior (by yielding \eg \href{https://statycc.github.io/pymwp/demo/#other_dense.c}{dense matrices} or \href{https://statycc.github.io/pymwp/demo/#other_explosion.c}{exponential number of derivations}), and classical examples such as computing the \href{https://statycc.github.io/pymwp/demo/#other_gcd.c}{greatest common divisor} or \href{https://statycc.github.io/pymwp/demo/#infinite_exponent_1.c}{exponentiation}.

We refer to \href{https://github.com/statycc/pymwp/releases/tag/profile-latest}{our benchmarks} (presented in \autoref{sec:app:benchmarks}) for measured analysis results for each program.
The most salient aspect is that our analysis is extremely fast (the time is measured in \emph{milli}seconds) despite important numbers of function calls (in the 10k range, excluding builtin Python language calls, for 10-lines programs).
Even examples tailored to stress our implementation %
cannot make the analysis go over \emph{4 seconds}.
We cannot compare our implementation with implementations of the original analysis, since it has never been implemented, and (according to our attempts) cannot be implemented in any realistic manner.

\subsection{Related tools and incompatible metrics}
\label{ssec:lit-rev}

This work was inspired by the series of works of the flow analysis from the \enquote{Copenhagen school}~\cite{Ben-Amram2008,Jones1995}. %
The overall flow analysis approach is related in spirit to abstract interpretation~\cite{Cousot1977b,Cousot1977a}; that bounds \emph{transitions} between states (\eg commands) instead of states~\cite{Jones1995}.
This approach shaped the implementation of tools detecting loop quasi-invariants~\cite{Moyen2017,Moyen2017b}. %

Other communities share a similar goal of inferring resource-usage. Complexity analyzers such as SPEED~\cite{Gulwani2009} for \texttt{C++}, COSTA~\cite{Costa2007} for \texttt{Java} bytecode, ComplexityParser~\cite{Hainry2021} for \texttt{Java}, Resource Aware ML for \texttt{OCaml}~\cite{Lichtman2017} or Cerco~\cite{Cerco2014} and Verasco~\cite{Jourdan2015} for \texttt{C} generate (certified) cost or runtime analysis on (subsets of) imperative programming languages.
Embracing such a large diversity is difficult, but our technique is different from existing implementations and tools: most of them focus on worst-case resource-usage complexity or termination, while we are interested in upper-bounds on the final values of program variables, \ie we focus on \emph{growth} instead of actual values.
This makes the comparison with our approach difficult, but highlights at the same time its uniqueness in today's landscape of static analyzers.

Further, our approach provides other desirable properties:
\begin{enumerate*}
	\item it is compositional, which allows to \enquote{hot-plug} bounds of previously analyzed functions without additional work,
	\item it is modular, as the internal machinery can be altered---as in this paper---without having to re-develop the theory,
	\item it is language-independent, as it reasons abstractly on imperative languages, but can be applied to real programs, as our implementation illustrates, and should extend to more complex languages,
	\item it is lightweight and programmer-friendly, as it is fast, does not require annotations or to record value ranges,
	\item it studies growth independently from \eg iteration bounds, thus sidestepping difficult cases that worst-case analysis has to tackle, and
	\item it may enable tight bounds on programs, as it has been done recently~\cite{Ben-Amram2020} for a similar analysis~\cite{Ben-Amram2008}.
\end{enumerate*}
In particular compositionality is a highly desirable property--because otherwise the analysis needs to be re-run on programs or API whenever embedded into different pieces of software--yet difficult to achieve by most other approaches, as discussed and partially remedied recently~\cite{Carbonneaux2015}.
While we suppose one approach could be used to derive the result obtained by the other, we do believe the originality of our pioneering ICC-based approach may inspire new and original directions in static program analysis.

\section{Conclusion: limitations, strengths and future work}
\label{sec:conclusion}

This work attempts to illustrate the usefulness and applicability of ICC results, but also the need to refine and adapt them.
We showed that the mwp-flow analysis as originally described cannot scale to programs in a real programming language: while the considered analysis is definitely powerful and elegant, its mathematical nature let some costly operations go unchecked. However we have shown that, extended and coupled to optimizations techniques, its result allows the development of a novel and original static analysis technique on imperative programs, focused on \emph{growth} rather than on termination or worst-case bounds.

This work is a proof of concept and it has limitations, both theoretical and practical: the theory is missing memory uses, pointers, and arrays and the supported feature set of the implementation could be extended.
But instead of focusing on what this analysis \emph{cannot} perform, we would like to stress that all the tools are in place to perform similar analysis on intermediate representations of code in compilers, which will naturally simplify the task of fitting richer program syntax to our analysis, and brings this technique yet another step closer to practical use cases.

Our next steps include certifying the analysis using the Coq proof assistant~\cite{coqman}, and implementing the analysis in certified tools such as the Compcert compiler~\cite{Leroy2009} (or, more precisely, its static single assignment version~\cite{Barthe2014}) or certified-llvm~\cite{Zhao2013}.
The plasticity of both compilers and of the implemented analysis should facilitate porting our results and approaches to support further programming languages in addition to \texttt{C}.
As complexity analysis is notably difficult in Coq~\cite{Gueneau2019}, we believe a push in this direction would be welcome, and that ICC provides all the needed tools for it.

\clearpage

\makeatletter
\def\@pdfborderstyle{/S/U/W 0}%
\makeatother

\bibliographystyle{plainurl}
\bibliography{bib}

\appendix
\makeatletter
\def\@pdfborder{0 0 0}%
\def\@pdfborderstyle{/S/U/W 1}%
\makeatother

\section{Technical appendix on semi-rings}
\label{app:sec:semi-ring}

\subsection{The mwp semi-ring}
\label{sec:app:mwp}

This subsection briefly recall the definition of semi-ring (\autoref{def:semi-ring}) and proves that the mwp semi-ring (\autoref{def:mwp-matrix-alg}) is indeed a semi-ring (\autoref{lem:mwp-is-a-semiring}).

\begin{definition}[Semi-ring]
	\label{def:semi-ring}
	A semi-ring \(\mathbb{S} =(S,0,1,+,\times )\) is specified by a set \(S\) and two binary operations \(+\) (addition) and \(\times \) (multiplication) such that \(\{0,1\} \in S\) and
	\begin{enumerate}
		\item \((S, 0,+)\) is a commutative monoid: the operation \(+\) is associative, commutative, and has \(0\) as the identity element,
		\item \((S, 1,\times )\) is a monoid: the operation \(\times \) is associative and has \(1\) as the identity element,
		\item the operation \(\times \) distributes with respect to \(+\): for all \(a, b, c \in S\), \(a \times (b + c) = a \times b + a \times c\) and \((b + c) \times a = b \times a + c \times a\)
		\item[]	We call \(\mathbb{S} \) a \emph{strong} semi-ring if, additionally, \emph{\(0\) annihilates \(S\)}, i.e.\
		\item \(0\times a=a\times 0=0\) for all \(a \in S\).
	\end{enumerate}
\end{definition}

\begin{lemma}[mwp semi-ring]
	\label{lem:mwp-is-a-semiring}
	The tuple \((\{0,m,w,p\}, 0, m, +, \times )\), with
	\begin{itemize}
		\item \( 0 < m < w < p \),
		\item \(\alpha + \beta = \begin{dcases*}
			      \alpha & if \(\alpha \geqslant \beta \) \\
			      \beta  & otherwise
		      \end{dcases*}\)
		\item \(\alpha \times \beta =
		      \begin{dcases*}
			      \alpha + \beta & if \(\alpha \neq 0\) and \(\beta \neq 0\) \\
			      0              & otherwise
		      \end{dcases*}\)
	\end{itemize}
	is a \emph{strong} semi-ring.
\end{lemma}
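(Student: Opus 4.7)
The plan is to verify each of the five clauses of \autoref{def:semi-ring} directly, exploiting the fact that both operations collapse to the order-theoretic structure $0 < m < w < p$: addition is literally $\max$, and multiplication is $\max$ restricted to nonzero arguments (with $0$ as absorbing element). Since $\max$ on a totally ordered set is associative and commutative, clauses (1) and (2) will reduce to identifying the neutral elements. For $(S,0,+)$, the element $0$ is the minimum of the chain, so $\max(0,\alpha)=\alpha$ gives identity. For $(S,m,\times)$, I will split on whether the other argument is $0$: if $\alpha = 0$ then $m \times 0 = 0$ by definition; if $\alpha \neq 0$ then $m \times \alpha = \max(m,\alpha) = \alpha$ since $m$ is the least nonzero element. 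Associativity of $\times$ requires a brief case split on whether any argument equals $0$: if any does, both associations yield $0$; otherwise all are nonzero and $\times$ agrees with $\max$, which is associative.

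For distributivity (clause 3), I would argue as follows. Commutativity of $+$ and $\times$ reduces the two distributive laws to one. Then I handle the case $a = 0$ first, where both sides of $a \times (b+c) = a \times b + a \times c$ collapse to $0$ using the absorbing definition of $\times$. For $a \neq 0$, I split on whether $b$ and $c$ are zero. When both are $0$, both sides are $0$. When exactly one is $0$, say $c = 0$, the left side is $a \times b = \max(a,b)$, and the right side is $\max(a,b) + 0 = \max(a,b)$. When both are nonzero, the left side is $\max(a,\max(b,c)) = \max(a,b,c)$, and the right side is $\max(\max(a,b),\max(a,c)) = \max(a,b,c)$, using associativity and idempotence of $\max$.

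Finally, the strong semi-ring condition (clause 5) is immediate by the very definition of $\times$: the second branch of the piecewise definition forces $0 \times \alpha = \alpha \times 0 = 0$ for every $\alpha$. I expect no serious obstacle here; the only mildly delicate point is making the case analysis for distributivity systematic enough that no case is missed, and being careful that the neutral element for $\times$ is $m$ rather than $p$, which is dictated precisely by $m$ being the bottom of the nonzero chain.
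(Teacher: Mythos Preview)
Your proposal is correct and follows essentially the same approach as the paper: a direct verification of the semi-ring axioms. The paper carries out the check by exhaustive case enumeration on the ordering of the arguments (six cases for associativity of $+$, explicit tables for the identity and annihilator, two cases each for distributivity), whereas you obtain the same conclusions more compactly by recognising $+$ as $\max$ on a chain and $\times$ as $\max$ restricted to nonzero elements; your zero/nonzero case split for distributivity is in fact slightly more explicit than the paper's terse monotonicity-based argument.
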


\begin{proof}
	We prove that \((\{0,m,w,p\}, 0, m, +, \times )\) as defined respects the conditions of \autoref{def:semi-ring}.
	The proof is straightforward but detailed nevertheless.
	\begin{description}
		\item[\((\{0,m,w,p\}, 0, +)\) is a commutative monoid]
			We prove that \((\{0,m,w,p\}, +)\) is a commutative monoid by showing that it is associative, commutative, and has \(0\) as identity.

			\begin{description}
				\item[Associativity]
					\((\alpha + \beta ) + \gamma = \alpha + (\beta + \gamma)\)
					\begin{description}
						\item[Case 1: \(\alpha \geqslant \beta \geqslant \gamma\)]
							\begin{align*}
								         &  & \alpha                     & = \alpha                    \\
								\implies &  & \alpha + \gamma            & = \alpha + \beta            \\
								\implies &  & (\alpha + \beta ) + \gamma & = \alpha + (\beta + \gamma)
							\end{align*}

						\item[Case 2: \(\alpha \geqslant \gamma \geqslant \beta\)]
							\begin{align*}
								         &  & \alpha                     & = \alpha                    \\
								\implies &  & \alpha + \gamma            & = \alpha + \gamma           \\
								\implies &  & (\alpha + \beta ) + \gamma & = \alpha + (\beta + \gamma)
							\end{align*}

						\item[Case 3: \(\beta \geqslant \alpha \geqslant \gamma\)]
							\begin{align*}
								         &  & \beta                      & = \beta                     \\
								\implies &  & \beta + \gamma             & = \alpha + \beta            \\
								\implies &  & (\alpha + \beta ) + \gamma & = \alpha + (\beta + \gamma)
							\end{align*}

						\item[Case 4: \(\beta \geqslant \gamma \geqslant \alpha\)]
							\begin{align*}
								         &  & \beta                      & = \beta                     \\
								\implies &  & \beta + \gamma             & = \alpha + \beta            \\
								\implies &  & (\alpha + \beta ) + \gamma & = \alpha + (\beta + \gamma)
							\end{align*}

						\item[Case 5: \(\gamma \geqslant \alpha \geqslant \beta\)]
							\begin{align*}
								         &  & \gamma                     & = \gamma                    \\
								\implies &  & \alpha + \gamma            & = \alpha + \gamma           \\
								\implies &  & (\alpha + \beta ) + \gamma & = \alpha + (\beta + \gamma)
							\end{align*}

						\item[Case 6: \(\gamma \geqslant \beta \geqslant \alpha\)]
							\begin{align*}
								         &  & \gamma                     & = \gamma                    \\
								\implies &  & \beta + \gamma             & = \alpha + \gamma           \\
								\implies &  & (\alpha + \beta ) + \gamma & = \alpha + (\beta + \gamma)
							\end{align*}
					\end{description}

				\item[Commutative Property]
					Both cases are immediate:
					\begin{description}
						\item[Case 1: \(\alpha \geqslant \beta\)] \(\implies \alpha + \beta = \alpha = \beta + \alpha\)
						\item[Case 2: \(\beta \geqslant \alpha\)] \(\implies \alpha + \beta = \beta = \beta + \alpha\)
					\end{description}

				\item[Identity element is \(0\)]
					\[0 + 0 = 0 \qquad 0 + m = m \qquad 0 + w = w \qquad 0 + p = p\]
			\end{description}

		\item[\((\{0,m,w,p\}, m, \times )\) is a monoid]
			We now prove that \((\{0,m,w,p\}, m, \times )\) is a monoid by showing that it is associative, has \(m\) as identity, and has \(0\) as the annihilator.

			\begin{description}
				\item [Associativity]

				      \((\alpha \times \beta ) \times \gamma = \alpha \times (\beta \times \gamma )\)

				      \begin{description}
					      \item[Case 1: ] \(\alpha, \beta, \gamma \in \{m,w,p\} \)

						      \(\alpha \times \beta = \alpha + \beta \) Associativity of operation + is shown in the proof of the commutative monoid, \((\{0,m,w,p \}, +)\).

					      \item[Case 2: ] \(\alpha\), \(\beta\), or \(\gamma\) equals \(0\)

						      By definition of multiplication, the product is \(0\).
				      \end{description}

				\item[Identity element is \(m\)]
					\begin{alignat*}{4}
						 & 0 \times m &  & = 0 &  & = m \times 0 \\
						 & m \times m &  & = m &  & = m \times m \\
						 & w \times m &  & = w &  & = m \times w \\
						 & p \times m &  & = p &  & = m \times p
					\end{alignat*}

				\item[0 annihilates \(\{0,m,w,p\}\)]
					\begin{alignat*}{4}
						 & 0 \times 0 &  & = 0 &  & = 0 \times 0 \\
						 & m \times 0 &  & = 0 &  & = 0 \times m \\
						 & w \times 0 &  & = 0 &  & = 0 \times w \\
						 & p \times 0 &  & = 0 &  & = 0 \times p
					\end{alignat*}
			\end{description}

		\item [Distribution of multiplication over addition]
		      We conclude by proving that \(\times \) distributes over \(+\).

		\item[Right Distribution]
			\(\alpha \times (\beta + \gamma) = (\alpha \times \beta) + (\alpha \times \gamma)\)
			\begin{description}
				\item[Case 1: \(\beta \geqslant \gamma\)]
					\begin{align*}
						\implies &  & \alpha \times \beta            & = \alpha \times \beta                              \\
						\implies &  & \alpha \times (\beta + \gamma) & = (\alpha \times \beta ) + (\alpha \times \gamma )
					\end{align*}

				\item[Case 2: \(\gamma \geqslant \beta\)]
					\begin{align*}
						\implies &  & \alpha \times \gamma           & = \alpha \times \gamma                             \\
						\implies &  & \alpha \times (\beta + \gamma) & = (\alpha \times \beta ) + (\alpha \times \gamma )
					\end{align*}
			\end{description}

		\item[Left Distribution]
			\( (\alpha + \beta ) \times \gamma = (\alpha \times \gamma) + (\beta \times \gamma)\)
			\begin{description}
				\item[Case 1: \(\alpha \geqslant \beta\)]
					\begin{align*}
						\implies &  & \alpha \times \gamma            & = \alpha \times \gamma                             \\
						\implies &  & (\alpha + \beta ) \times \gamma & = (\alpha \times \gamma ) + (\beta \times \gamma )
					\end{align*}

				\item[Case 3: \(\beta \geqslant \alpha\)]
					\begin{align*}
						\implies &  & \beta \times \gamma             & = \beta \times \gamma                                       \\
						\implies &  & (\alpha + \beta ) \times \gamma & = (\alpha \times \gamma ) + (\beta \times \gamma ) \qedhere
					\end{align*}
			\end{description}

	\end{description}
\end{proof}

\subsection{Matrix semi-ring}
\label{sec:app:matrix}

This subsection explains and details how matrices with coefficients in a semi-ring can be used to construct semi-rings.

\begin{lemma}[Matrix semi-ring]
	\label{lem:matrices}
	Given a strong semi-ring \(\mathbb{S} = (S, 0, 1, +, \times )\), the tuple \(\mathbb{M} = (M,\mathbf{0} ,\mathsfbf{1} ,\oplus ,\otimes )\), with
	\begin{itemize}
		\item \(M\) the set of all \(n \times n\) matrices over \(S\), for all \(n \in \mathbb{N}\),
		\item \(\mathbf{0} \) defined by \(M = \mathbf{0} \) iff \(M_{ij} = 0\) for all \(i\) and \(j\),
		\item \(\mathsfbf{1} \) defined by \(M = \mathsfbf{1} \) iff \(M_{ij} = 1\) for \(i = j\), \(M_{ij} = 0\) otherwise,
		\item \(\oplus \) defined by \(C = A \oplus B\) iff \(C_{ij} = A_{ij} + B_{ij}\),
		\item \(\otimes \) defined by \(C = A \otimes B\) iff \(C_{ij} = \sum_{k=1}^{n} A_{ik} \times B_{kj}\),
	\end{itemize}
	is a strong semi-ring.
\end{lemma}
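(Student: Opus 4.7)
The plan is to lift the semi-ring axioms of $\mathbb{S}$ to the matrix setting by verifying each property either entrywise (for the additive structure) or via finite-sum manipulations (for the multiplicative structure). The proof reduces to routine calculations, but it is worth organizing them so as to make explicit where the \emph{strong} hypothesis on $\mathbb{S}$ is used. Throughout, I would work at a fixed dimension $n$, since the operations are only defined on matrices of matching size.

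First I would show that $(M, \zmat, \oplus)$ is a commutative monoid. Since $\oplus$ is defined componentwise, associativity, commutativity, and the identity axiom for $\zmat$ all follow immediately from the corresponding axioms for $+$ in $\mathbb{S}$, applied entry by entry.

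Next I would verify that $(M, \umat, \otimes)$ is a monoid. For the identity, one computes $(\umat \otimes A)_{ij} = \sum_k \umat_{ik} \times A_{kj}$: all terms with $k \neq i$ vanish because $\umat_{ik} = 0$ annihilates, which is precisely where the strong semi-ring hypothesis enters, leaving only $1 \times A_{ij} = A_{ij}$; the argument for $A \otimes \umat = A$ is symmetric. Associativity is the familiar double-sum computation
\[
((A \otimes B) \otimes C)_{ij} = \sum_k \Bigl( \sum_\ell A_{i\ell} \times B_{\ell k} \Bigr) \times C_{kj} = \sum_\ell A_{i\ell} \times \Bigl( \sum_k B_{\ell k} \times C_{kj} \Bigr) = (A \otimes (B \otimes C))_{ij},
\]
where the middle equality uses distributivity of $\times $ over $+$ in $\mathbb{S}$ together with associativity and commutativity of $+$ to swap the order of summation.

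Finally, distributivity of $\otimes$ over $\oplus$ on both sides follows by applying distributivity in $\mathbb{S}$ inside each summand and recombining sums, and the absorbing property $\zmat \otimes A = A \otimes \zmat = \zmat$ holds because every term in the defining sum has $0$ as a factor and is therefore $0$ by the strong hypothesis. The only real obstacle is bookkeeping in the associativity calculation; all other verifications are immediate once the strong semi-ring assumption is noted, which is what guarantees that spurious $0$-terms in the various sums collapse cleanly and that the constructed structure is itself a strong semi-ring, enabling the iteration of this construction when coefficients are themselves matrices.
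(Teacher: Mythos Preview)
Your proposal is correct and follows essentially the same approach as the paper: both verify each semi-ring axiom directly, lifting the additive structure entrywise and handling the multiplicative structure via the standard double-sum manipulation for associativity, with the identity and annihilation properties of $\umat$ and $\zmat$ relying on the strong hypothesis. Your write-up is more concise and makes the role of the strong assumption more explicit, but the underlying argument is the same.
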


\begin{proof}
	We prove that \(\mathbb{M} = (M,\mathbf{0} ,\mathsfbf{1} ,\oplus ,\otimes )\) as defined respects the conditions of \autoref{def:semi-ring}.
	Let \(A\),\(B\),\(C\) be \(n \times n\) matrices over \(S\) where \(n \in \mathbb{N}\).
	\begin{description}
		\item[\( (M,\mathbf{0} ,\mathsfbf{1} ,\oplus )\) is a commutative monoid]
			We prove that \((M, \oplus )\) is a commutative monoid by showing that it is associative, commutative, and has \(\mathbf{0} \) as identity.

			\begin{description}
				\item[Associativity]
					\((A \oplus B) \oplus C = A \oplus (B \oplus C)\) iff \( ((A \oplus B) \oplus C)_{ij} = (A \oplus (B \oplus C))_{ij}\) for all \(i\), \(j\).
					\begin{align*}
						((A \oplus B) \oplus C)_{ij} & = (A \oplus B)_{ij} + C_{ij}                                 \\
						                             & = (A_{ij} + B_{ij}) + C_{ij}                                 \\
						                             & = A_{ij} + (B_{ij} + C_{ij}) \tag{by associativity of \(+\)} \\
						                             & = A_{ij} + (B \oplus C)_{ij}                                 \\
						                             & = (A \oplus ( B \oplus C))_{ij}
					\end{align*}

				\item[Commutative Property]
					\(A \oplus B = B \oplus A\) iff \((A \oplus B)_{ij} = (B \oplus A)_{ij}\) for all \(i\), \(j\).
					\begin{align*}
						(A \oplus B)_{ij} & = A_{ij} + B_{ij}                                 \\
						                  & = B_{ij} + A_{ij} \tag{by commutativity of \(+\)} \\
						                  & = (B \oplus A)_{ij}
					\end{align*}

				\item[Identity element is \(\mathbf{0} \)]
					Let \(A = \mathbf{0} \), then \(A_{ij} = 0\) for all \(i\), \(j\), and \(\mathbf{0} \) is the identity element iff \(A_{ij} + B_{ij} = B_{ij}\) for all \(i\), \(j\)
					\begin{align*}
						(A \oplus B)_{ij} & = A_{ij} + B_{ij}                       \\
						                  & = 0 + B_{ij} \tag{by identity of \(+\)} \\
						                  & = B_{ij}
					\end{align*}
			\end{description}

		\item[\((M, \mathsfbf{1} , \otimes )\) is a monoid]
			We now prove that \((M, \otimes )\) is a monoid by showing that it is associative and has \(\mathsfbf{1} \) as identity.

			\begin{description}
				\item [Associativity]
				      \((A \otimes B) \otimes C = A \otimes (B \otimes C)\) iff \(((A \otimes B) \otimes C) _{ij} = (A \otimes (B \otimes C))_{ij}\) for all \(i\), \(j\).
				      \begin{align*}
					      ((A \otimes B) \otimes C) _{ij} & = (\sum_{k=1}^{n} A_{ik} \times B_{kj}) \otimes C                                                  \\
					                                      & = \sum_{l=1}^{n} (\sum_{k=1}^{n} A_{ik} \times B_{kj})_{il} \times C_{lj}                          \\
					                                      & = \sum_{l=1}^{n} \sum_{k=1}^{n} (A_{ik} \times B_{kl}) \times C_{lj}                               \\
					                                      & = \sum_{k=1}^{n} \sum_{l=1}^{n} A_{ik} \times (B_{kl} \times C_{lj})\tag{by assoc. of \(\times \)} \\
					                                      & = \sum_{k=1}^{n} A_{ik} \times (\sum_{l=1}^{n} B_{il} \times C_{lj})_{kj}                          \\
					                                      & = A \otimes (\sum_{l=1}^{n} B_{il} \times C_{lj})                                                  \\
					                                      & = (A \otimes ( B \otimes C))_{ij}
				      \end{align*}

				\item[Identity element is \(\mathsfbf{1} \)] \(A \otimes B = B \) and \(B \otimes A = B \) where \(A = \mathsfbf{1} \) iff \(A_{ij} = 1\) for \(i = j\) and \(A_{ij} = 0\) otherwise.
					\begin{align*}
						(A \otimes B)_{ij} & = \sum_{k=1}^{n} A_{ik} \times B_{kj}                                                                      \\
						                   & = (A_{ii} \times B_{ij}) + \sum_{\mathclap{k=1,k \neq i}}^{n} A_{ik} \times B_{kj}                         \\
						                   & = (1 \times B_{ij}) + \sum_{\mathclap{k=1,k \neq i}}^{n} 0 \times B_{kj}\tag{by def. of \(\mathsfbf{1} \)} \\
						                   & = (1 \times B_{ij}) + \sum_{\mathclap{k=1,k \neq i}}^{n} 0\tag{by annihilation prop. of \(0\)}             \\
						                   & = (1 \times B_{ij}) \tag{by identity of \(+\)}                                                             \\
						                   & = B_{ij} \tag{by identity of \(\times \)}
					\end{align*}

					\begin{align*}
						(B \otimes A)_{ij} & = \sum_{k=1}^{n} B_{ik} \times A_{kj}                                                                      \\
						                   & = (B_{ij} \times A_{jj}) + \sum_{\mathclap{k=1,k \neq j}}^{n} B_{ik} \times A_{kj}                         \\
						                   & = (B_{ij} \times 1) + \sum_{\mathclap{k=1,k \neq j}}^{n} B_{ik} \times 0\tag{by def. of \(\mathsfbf{1} \)} \\
						                   & = (B_{ij} \times 1) + \sum_{\mathclap{k=1,k \neq j}}^{n} 0\tag{by annihilation prop. of \(0\)}             \\
						                   & = (B_{ij} \times 1) \tag{by identity of \(+\)}                                                             \\
						                   & = B_{ij} \tag{by identity of \(\times \)}
					\end{align*}

				\item [\(\mathbf{0} \) annihilates \(M\)] \(A \otimes B = \mathbf{0} \) and \(B \otimes A = \mathbf{0} \) where \(A = \mathbf{0} \) iff \(A_{ij} = 0\) for all \(i\), \(j\).

				      \begin{align*}
					      (A \otimes B)_{ij} & = \sum_{k=1}^{n} A_{ik} \times B_{kj}                            \\
					                         & = \sum_{k=1}^{n} 0 \times B_{kj}\tag{by def. of \(\mathbf{0} \)} \\
					                         & = \sum_{k=1}^{n} 0 \tag{by annihilation prop. of \(0\)}          \\
					                         & = 0
				      \end{align*}

				      \begin{align*}
					      (B \otimes A)_{ij} & = \sum_{k=1}^{n} B_{ik} \times A_{kj}                             \\
					                         & = \sum_{k=1}^{n} B_{kj} \times 0 \tag{by def. of \(\mathbf{0} \)} \\
					                         & = \sum_{k=1}^{n} 0 \tag{by annihilation prop. of \(0\)}           \\
					                         & = 0
				      \end{align*}

				\item [Distribution of multiplication over addition]
				\item[Right Distribution]
					\(A \otimes (B \oplus C) = (A \otimes B) \oplus (A \otimes C)\) iff \((A \otimes (B \oplus C))_{ij} = ((A \otimes B) \oplus (A \otimes C))_{ij}\) for all \(i\), \(j\).

					\begin{align*}
						A \otimes (B \oplus C))_{ij} & = \sum_{k=1}^{n} \big( A_{ik} \times (B_{kj} + C_{kj})\big)                                                            \\
						                             & = \sum_{k=1}^{n} \big((A_{ik} \times B_{kj}) + ( A_{ik} \times C_{kj})\big) \tag{by right distribution of \(\times \)} \\
						                             & = \sum_{k=1}^{n} (A_{ik} \times B_{kj}) + \sum_{k=1}^{n} (A_{ik} \times C_{kj})                                        \\
						                             & = (A \otimes B)_{ij} + (A \otimes C)_{ij}                                                                              \\
						                             & = ((A \otimes B) \oplus (A \otimes C))_{ij}
					\end{align*}

				\item[Left Distribution]
					\((A \oplus B) \otimes C = (A \otimes C) \oplus ( B \otimes C)\) iff \(((A \oplus B) \otimes C)_{ij} = ((A \otimes C)\oplus ( B \otimes C))_{ij}\) for all \(i\), \(j\).

					\begin{align*}
						((A \oplus B) \otimes C)_{ij} & = \sum_{k=1}^{n} \big( (A_{ik} + B_{ik}) \times C_{kj} \big)                                                           \\
						                              & = \sum_{k=1}^{n} \big( (A_{ik} \times C_{kj}) + ( B_{ik} \times C_{kj})\big) \tag{by left distribution of \(\times \)} \\
						                              & = \sum_{k=1}^{n} (A_{ik} \times C_{kj}) + \sum_{k=1}^{n} (B_{ik} \times C_{kj})                                        \\
						                              & = (A \otimes C)_{ij} + (B \otimes C)_{ij}                                                                              \\
						                              & = ((A \otimes C) \oplus (B \otimes C))_{ij}
						\qedhere
					\end{align*}
			\end{description}
	\end{description}
\end{proof}

For simplicity, we will write \(\mathbb{M}\) as \(\mathbb{M} (\mathbb{S} ) = (M(S),\mathbf{0} ,\mathsfbf{1} ,\oplus ,\otimes )\).

\subsection{Choices semi-ring}
\label{sec:app:choice}

This subsection explains and details how functions into semi-ring coefficients can be used to construct semi-rings (\autoref{lem:functions}), and the interplay between this construction and the matrix semi-ring from the previous subsection (\autoref{lem:semi-ring-iso}) using the notion of semi-ring isomorphism (\autoref{def:iso}).

\begin{lemma}[Choices semi-ring]
	\label{lem:functions}
	Given a strong semi-ring \(\mathbb{S} = (S, 0, 1, +, \times )\) and a set \(A\), the tuple \(\mathbb{F} = (F, \mathsf{0} , \mathsf{1} , \boxplus , \boxtimes )\), with
	\begin{itemize}
		\item \(F\) the set of functions from \(A\) to \(S\),
		\item \(\mathsf{0} \) the constant function \(\mathsf{0} (a) = 0\) for all \(a \in A\),
		\item \(\mathsf{1} \) the constant function \(\mathsf{1} (a) = 1\) for all \(a \in A\),
		\item \(\boxplus \) defined componentwise: \((f \boxplus g)(a) = (f(a)) + (g(a))\), for all \(f\), \(g\) in \(F\) and \(a \in A\),
		\item \(\boxtimes \) defined componentwise: \((f \boxtimes g)(a) = (f(a)) \times (g(a))\), for all \(f\), \(g\) in \(F\) and \(a \in A\),
	\end{itemize}
	is a strong semi-ring.
\end{lemma}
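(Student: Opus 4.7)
The plan is to prove each defining axiom of a strong semi-ring for $\mathbb{F}$ by reducing it pointwise to the corresponding axiom holding in $\mathbb{S}$. Since $\boxplus$ and $\boxtimes$ are defined componentwise, function extensionality allows us to conclude the desired identity $f = g$ in $F$ from the identity $f(a) = g(a)$ in $S$ for every $a \in A$. I would therefore fix arbitrary elements $f, g, h \in F$ and an arbitrary $a \in A$ at the outset of each case, and write everything in terms of the evaluations $f(a)$, $g(a)$, $h(a)$ so that the strong semi-ring axioms for $\mathbb{S}$ can be invoked directly.

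Following the structure of \autoref{def:semi-ring} and mirroring the presentation of \autoref{lem:mwp-is-a-semiring} and \autoref{lem:matrices}, I would first establish that $(F, \mathsf{0}, \boxplus)$ is a commutative monoid. Associativity unfolds as $((f \boxplus g) \boxplus h)(a) = (f(a) + g(a)) + h(a) = f(a) + (g(a) + h(a)) = (f \boxplus (g \boxplus h))(a)$ using associativity of $+$ in $\mathbb{S}$; commutativity is analogous; and the identity law $\mathsf{0} \boxplus f = f$ reduces to $0 + f(a) = f(a)$. Next, I would show that $(F, \mathsf{1}, \boxtimes)$ is a monoid in the same style, using associativity of $\times$ for the first part and the identity $1 \times f(a) = f(a) = f(a) \times 1$ for the second.

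For the distributivity laws, I would evaluate both sides at an arbitrary $a$: for right distributivity, $(f \boxtimes (g \boxplus h))(a) = f(a) \times (g(a) + h(a)) = (f(a) \times g(a)) + (f(a) \times h(a)) = ((f \boxtimes g) \boxplus (f \boxtimes h))(a)$, using right distributivity of $\times$ over $+$ in $\mathbb{S}$; left distributivity is symmetric. Finally, for the strong semi-ring condition, I would show that $\mathsf{0}$ annihilates $F$ via $(\mathsf{0} \boxtimes f)(a) = 0 \times f(a) = 0 = \mathsf{0}(a)$, using the annihilation property of $0$ in $\mathbb{S}$, and symmetrically on the right.

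The proof contains no real obstacle: once the proper extensionality reduction is set up, every case is a one-line appeal to the corresponding axiom of $\mathbb{S}$. The only mild subtlety worth flagging is the need for $\mathbb{S}$ to be \emph{strong}, not merely a semi-ring: the annihilation clause for $\mathsf{0}$ in $\mathbb{F}$ depends strictly on $0 \times s = s \times 0 = 0$ holding in $S$, so this hypothesis is used in exactly one place and cannot be dropped. I would close the proof with a \texttt{\textbackslash qedhere} on the final displayed equation, as is done in \autoref{lem:mwp-is-a-semiring} and \autoref{lem:matrices}, preserving the stylistic consistency of the appendix.
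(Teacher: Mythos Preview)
Your proposal is correct and matches the paper's own proof essentially line for line: both arguments verify each semi-ring axiom by evaluating at an arbitrary $a \in A$ and invoking the corresponding axiom of $\mathbb{S}$, treating the commutative monoid $(F,\mathsf{0},\boxplus)$, the monoid $(F,\mathsf{1},\boxtimes)$, distributivity, and annihilation in that order. Your observation that the strength hypothesis on $\mathbb{S}$ is used precisely (and only) for the annihilation clause is also accurate.
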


\begin{proof}
	\begin{description}
		\item[ \((F, \mathsf{0} , \boxplus )\) is a commutative monoid] We first prove that \((F, \mathsf{0} , \boxplus )\) is a commutative monoid by showing that it is associative, commutative, and has \(\mathsf{0} \) as identity.
			\begin{description}
				\item[Associativity]
					\begin{align*}
						((f \boxplus g) \boxplus h)(a) & = (f(a) + g(a)) + h(a)                                          \\
						                               & = f(a) + (g(a) + h(a)) \tag{by assoc. of \(+\)}                 \\
						                               & = (f \boxplus (g \boxplus h))(a) \tag{by def. of \(\boxplus \)}
					\end{align*}

				\item[Commutativity]
					\begin{align*}
						(f \boxplus g)(a) & = f(a) + g(a)                                     \\
						                  & = g(a) + f(a)\tag{by commutativity of \(+\)}      \\
						                  & = (g \boxplus f)(a)\tag{by def. of \(\boxplus \)}
					\end{align*}

				\item[Identity element is \(0\)]
					\begin{align*}
						(\mathsf{0} \boxplus f)(a) & = \mathsf{0} (a) + f(a)                     \\
						                           & = 0 + f(a) \tag{by def. of \(\mathsf{0} \)} \\
						                           & = f(a) \tag{by identity prop of \(+\)}
					\end{align*}
			\end{description}

		\item[\((F, 1, \boxtimes )\) is a monoid] We now prove that \((F, 1, \boxtimes )\) is a monoid by showing that it is associative and has \(\mathsf{1} \) as identity.
			\begin{description}
				\item[Associativity]
					\begin{align*}
						((f \boxtimes g) \boxtimes h)(a) & = (f(a) \times g(a)) \times h(a)                                   \\
						                                 & = f(a) \times (g(a) \times h(a)) \tag{by assoc. of \(\times\) }    \\
						                                 & = (f \boxtimes (g \boxtimes h))(a) \tag{by def. of \(\boxtimes \)}
					\end{align*}

				\item[Identity element is \(1\)]
					\begin{align*}
						(\mathsf{1} \boxtimes f)(a) & = \mathsf{1} (a) \times f(a)                    \\
						                            & = 1 \times f(a)\tag{by def. of \(\mathsf{1} \)} \\
						                            & = f(a) \tag{by identity prop of \(\times\) }
					\end{align*}
			\end{description}

		\item [Distribution of multiplication over addition]
		      We conclude by proving that \(\boxtimes \) distributes over \(\boxplus \).
		      \begin{description}
			      \item[Right Distribution]
				      \begin{align*}
					      (f \boxtimes (g \boxplus h))(a) & = f(a) \times (g(a) + h(a))                                                          \\
					                                      & = (f(a) \times g(a)) + (f(a) \times h(a)) \tag{by right distribution of \(\times \)} \\
					                                      & = ((f \boxtimes g) \boxplus (f \boxtimes h))(a)
				      \end{align*}

			      \item[Left Distribution]
				      \begin{align*}
					      ((f \boxplus g) \boxtimes h)(a) & = (f(a) + g(a)) \times h(a)                                                        \\
					                                      & = (f(a) \times h(a)) + (g(a) \times h(a))\tag{by left distribution of \(\times \)} \\
					                                      & = ((f \boxtimes h) \boxplus (g \boxtimes h))(a)
				      \end{align*}
		      \end{description}

		\item[\(0\) annihilates \(F\)]
			\begin{align*}
				(\mathsf{0} \boxtimes f)(a) & = \mathsf{0} (a) \times f(a)                     \\
				                            & = 0 \times f(a) \tag{by def. of \(\mathsf{0} \)} \\
				                            & = 0 \tag{by annihilation prop of \(0\)}
			\end{align*}

			\begin{align*}
				(f \boxtimes \mathsf{0} )(a) & = f(a) \times \mathsf{0} (a)                     \\
				                             & = f(a) \times 0 \tag{by def. of \(\mathsf{0} \)} \\
				                             & = 0 \tag{by annihilation prop of \(0\)}
			\end{align*}

	\end{description}
\end{proof}

For simplicity, we will write \(\mathbb{F} \) as \(A \to \mathbb{S} = (A \to S, 0, 1, +, \times )\).

\begin{definition}[Semi-ring isomorphism]
	\label{def:iso}
	Two semi-rings \(\mathbb{S} =(S, 0, 1, +, \times )\) and \(\mathbb{T} = (T, \mathsf{0} , \mathsf{1} , \boxplus , \boxtimes )\) are \emph{isomorphic} and write \(\mathbb{S} \cong \mathbb{T} \) if there exists \(g: S \to T\) such that

	\begin{itemize}
		\item \(g\) is a bijection,
		\item \(g(0) = \mathsf{0} \),
		\item \(g(1) = \mathsf{1} \),
		\item \(g(s_1 + s_2) = g(s_1) \boxplus g(s_2)\) for all \(s_1, s_2 \in S\)
		\item \(g(s_1 \times s_2) = g(s_1) \boxtimes g(s_2)\) for all \(s_1, s_2 \in S\)
	\end{itemize}
	For simplicity, we write \(g: \mathbb{S} \to \mathbb{T} \) for such morphisms.
\end{definition}

\begin{lemma}
	\label{lem:semi-ring-iso}
	For all set \(A\) and strong semi-ring \(\mathbb{S} \), \(\mathbb{M} (A \to \mathbb{S} ) \cong A \to \mathbb{M} (\mathbb{S} )\).
\end{lemma}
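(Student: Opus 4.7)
The plan is to exhibit an explicit bijection \(g: \mathbb{M}(A \to \mathbb{S}) \to A \to \mathbb{M}(\mathbb{S})\) that swaps the order of the two constructions, and then verify the five clauses of \autoref{def:iso}. Given \(M \in \mathbb{M}(A \to \mathbb{S})\), each coefficient \(M_{ij}\) is itself a function from \(A\) to \(S\), so I set \(g(M)\) to be the function sending \(a \in A\) to the matrix with coefficients \((g(M)(a))_{ij} = M_{ij}(a)\). The inverse simply reads the coefficients the other way round: it sends \(h: A \to \mathbb{M}(\mathbb{S})\) to the matrix whose \((i,j)\) entry is the function \(a \mapsto (h(a))_{ij}\). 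Bijectivity is immediate, since the two constructions compose to the identity coefficient-by-coefficient and pointwise in \(a\).

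Preservation of the distinguished elements and of the addition is routine. For the neutral elements, evaluating \(g(\zmat)\) and \(g(\umat)\) at any \(a \in A\) recovers exactly the zero matrix and the identity matrix of \(\mathbb{M}(\mathbb{S})\), so both sides agree as constant functions. For the addition, \((g(M \oplus N)(a))_{ij} = (M \oplus N)_{ij}(a) = M_{ij}(a) + N_{ij}(a)\), where the last equality uses that the sum on functions in \(A \to \mathbb{S}\) is defined pointwise (\autoref{lem:functions}); this coincides with \((g(M)(a) \oplus g(N)(a))_{ij}\) by the componentwise definition of \(\oplus\) in \(\mathbb{M}(\mathbb{S})\) (\autoref{lem:matrices}), and collecting the equality over all \(a\) yields the required identity in \(A \to \mathbb{M}(\mathbb{S})\).

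The main obstacle is preservation of the multiplication, since two different matrix products---one living in \(\mathbb{M}(A \to \mathbb{S})\), and one fibered over \(A\) inside \(A \to \mathbb{M}(\mathbb{S})\)---must be shown to coincide. Unfolding the left-hand side gives \((g(M \otimes N)(a))_{ij} = (M \otimes N)_{ij}(a) = \bigl(\sum_{k} M_{ik} \boxtimes N_{kj}\bigr)(a)\), where the sum and product on the right are those of \(A \to \mathbb{S}\). By \autoref{lem:functions} both \(\boxplus\) and \(\boxtimes\) are defined pointwise, so this evaluates to \(\sum_{k} M_{ik}(a) \times N_{kj}(a)\), which is in turn the \((i,j)\) coefficient of \(g(M)(a) \otimes g(N)(a)\) in \(\mathbb{M}(\mathbb{S})\). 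Since this holds for every \(a\), we obtain \(g(M \otimes N) = g(M) \boxtimes g(N)\) as elements of \(A \to \mathbb{M}(\mathbb{S})\), concluding the verification. The argument uses no property of \(\mathbb{S}\) beyond its being a strong semi-ring, and the same coefficient-swapping extends verbatim to rectangular matrices as mentioned in the footnote to \autoref{ex:mixing}, since only pointwise operations on coefficients are involved.
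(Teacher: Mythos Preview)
Your proof is correct and follows essentially the same approach as the paper: both define the explicit coefficient-swapping bijection \(g(M)(a)_{ij}=M_{ij}(a)\), exhibit its inverse, and verify the five isomorphism clauses pointwise in \(a\) and coefficientwise in \((i,j)\). Your write-up is more concise but covers the same verifications, and your closing remark on rectangular matrices even anticipates the paper's own footnote.
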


\begin{proof}
	First, observe that by Lemmas~\ref{lem:matrices} and \ref{lem:functions}, both \(A \to \mathbb{M} (\mathbb{S} )\) and \(\mathbb{M} (A \to \mathbb{S} )\) are strong semi-rings, and we write \(0_f\) (resp.\ \(0_M\)) and \(1_f\) (resp.\ \(1_M\)) for the \(0\) and \(1\) elements of \(A \to \mathbb{M} (\mathbb{S} )\) (resp.\ of \(\mathbb{M} (A \to \mathbb{S} )\)).
	Now we have to prove that we can construct a bijection \(g : M(A \to S) \to (A \to M(S))\) that respects the conditions of \autoref{def:iso}.

	We define \(g\) and \(g^{-1}\) at the same time, then show that they are indeed inverses:
	\begin{description}
		\item[\(g : M(A \to S) \to (A \to M(S))\)]
			Given \(M \in M(A \to S)\) of size \(n \times n\), we let \(g(M) \in A \to M(S)\) be the function that maps \(a \in A\) to \(M\) where the same argument \(a\) has been applied to the functions \(f_{1,1}, \hdots, f_{n,n}\).
			Graphically:

			\[
				g(M)a =
				g(\begin{pmatrix}
					M_{1,1} & \ldots & M_{1,n} \\
					\vdots  & \ddots & \vdots  \\
					M_{n,1} & \ldots & M_{n,n}
				\end{pmatrix})a =
				\begin{pmatrix}
					M_{1,1}a & \ldots & M_{1,n}a \\
					\vdots   & \ddots & \vdots   \\
					M_{n,1}a & \ldots & M_{n,n}a
				\end{pmatrix}
			\]

			Below, we write \(f_M\) for \(g(M)\).

		\item[\(g^{-1}: (A \to M(S)) \to M(A \to S)\)]
			Given \(f \in A \to M(S)\), we define \(g^{-1}(f) \in M(A \to S)\) to be the matrix of size \(n \times n\), for \(n \times n\) the size of the matrix returned by \(f\), such that \((g^{-1}(f))_{i, j}\) is the function that maps \(a \in A\) to \((f(a))_{i,j}\) for all \(i\), \(j\).
			Graphically:

			\[
				g^{-1}(f)a =
				\begin{pmatrix}
					(fa)_{1,1} & \ldots & (fa)_{1,n} \\
					\vdots     & \ddots & \vdots     \\
					(fa)_{n,1} & \ldots & (fa)_{n,n}
				\end{pmatrix}
			\]

			Below, we write \(M_f\) for \(g^{-1}(f)\).
	\end{description}

	\begin{description}
		\item[\(g\) is a bijection]
			We first prove that \(g \circ g^{-1} = g^{-1} \circ g = \id\).
			\begin{description}
				\item[\((g^{-1} \circ g)(M) = M\)]
					\begin{align*}
						(g^{-1} \circ g)(M) & = g^{-1}(g(M))                                                  \\
						                    & = g^{-1}(f_M) \tag{\small{where \((f_M(a))_{ij} = M_{ij}(a)\)}} \\
						                    & = M
					\end{align*}

				\item[\((g \circ g^{-1})(f) = f\)]
					\begin{align*}
						(g \circ g^{-1})(f) & = g(g^{-1}(f))                                             \\
						                    & = g(M_f) \tag{\small{where \((M_f)_{ij}a = (f(a))_{ij}\)}} \\
						                    & = f
					\end{align*}
			\end{description}

		\item[\(g(0_M) = 0_f\)]
			Let \(f = g(0_M)\), then \(f = 0_f\) iff \(f(a)_{ij} = 0_{\mathbb{S}}\) for all \(i\), \(j\).
			\begin{align*}
				f(a)_{ij} & = (0_M)_{ij}(a)                           \\
				          & = 0_f(a)\tag{by def. of \(0_M\)}          \\
				          & = 0_{\mathbb{S}} \tag{by def. of \(0_f\)}
			\end{align*}

		\item[\(g(1_M) = 1_f\)]
			Let \(f = g(1_M)\), then \(f = 1_f\) iff \(f(a)_{ij} = 1_{\mathbb{S}} \) for all \(i=j\) and \(f(a)_{ij} = 0_{\mathbb{S}} \) otherwise.
			\begin{description}
				\item[Case 1: \(i = j\)]
					\begin{align*}
						f(a)_{ij} & = (1_M)_{ij}(a)                           \\
						          & = 1_f(a)\tag{by def. of \(1_M\)}          \\
						          & = 1_{\mathbb{S}} \tag{by def. of \(1_f\)}
					\end{align*}

				\item[Case 2: \(i \neq j\)]
					\begin{align*}
						f(a)_{ij} & = (1_M)_{ij}(a)                           \\
						          & = 0_f(a)\tag{by def. of \(1_M\)}          \\
						          & = 0_{\mathbb{S}} \tag{by def. of \(0_f\)}
					\end{align*}
			\end{description}

		\item[\(g(M_1 + M_2) = g(M_1) + g(M_2)\)]
			\begin{align*}
				     & g(M_1 + M_2) = g(M_1) + g(M_2)                                              \\
				\iff & f_{M_1 + M_2} = f_{M_1} + f_{M_2}                                           \\
				\iff & f_{M_1 + M_2}(a) = (f_{M_1} + f_{M_2})(a)                                   \\
				\iff & f_{M_1 + M_2}(a) = f_{M_1}(a) + f_{M_2} (a)                                 \\
				\iff & (f_{M_1 + M_2}(a))_{ij} = (f_{M_1}(a) + f_{M_2} (a))_{ij}                   \\
				\iff & (M_1 + M_2)_{ij}(a) = (M_1)_{ij}(a)+ (M_2)_{ij}(a) \tag{by assoc. of \(+\)}
			\end{align*}

		\item[\(g(M_1 \times M_2) = g(M_1) \times g(M_2)\)]
			\begin{align*}
				     & g(M_1 \times M_2) = g(M_1) \times g(M_2)                                                                                                      \\
				\iff & f_{M_1 \times M_2} = f_{M_1} \times f_{M_2}                                                                                                   \\
				\iff & f_{M_1 \times M_2}(a) = (f_{M_1} \times f_{M_2})(a)                                                                                           \\
				\iff & f_{M_1 \times M_2}(a) = (f_{M_1})(a) \times (f_{M_2})(a)                                                                                      \\
				\iff & (f_{M_1 \times M_2}(a))_{ij} = ((f_{M_1})(a) \times (f_{M_2})(a))_{ij}                                                                        \\
				\iff & (\sum_{k=1}^{n} (M_1)_{ik} \times (M_2)_{kj})(a) = \sum_{k=1}^{n} (M_1)_{ik}(a) \times (M_2)_{kj}(a) \tag{by assoc. of \(+\) and \(\times \)}
			\end{align*}
	\end{description}
\end{proof}

\subsection{Partiality semi-ring}
\label{sec:app:partiality}

In our improvement of the analysis, we add an \(\infty\) element to the mwp semi-ring, but reason abstractly below with an arbitrary semi-ring and a \(\bot\) element.

\begin{lemma}
	Given a strong semi-ring \(\mathbb{S} = (S,0,1,+,\times )\) and an element \(\bot \notin S\), \(\mathbb{S}^{\bot} = (S \cup \{\bot\},0,1,+^{\bot},\times^{\bot} )\) with, for all \(a\), \(b \in S \cup \{\bot\}\),
	\begin{align*}
		a +^{\bot} b      & =\begin{dcases*}
			                     a + b & if \(a, b \neq \bot\) \\
			                     \bot  & otherwise
		                     \end{dcases*}      \\
		a \times^{\bot} b & =\begin{dcases*}
			                     a \times b & if \(a, b \neq \bot\) \\
			                     \bot       & otherwise
		                     \end{dcases*}
	\end{align*}
	is a semi-ring.
\end{lemma}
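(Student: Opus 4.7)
The plan is to verify each clause of \autoref{def:semi-ring} in turn, exploiting two organizing observations. First, both $+^\bot$ and $\times^\bot$ restrict to $+$ and $\times$ on $S$, so whenever every element involved in a candidate identity belongs to $S$, the identity follows immediately from $\mathbb{S}$ being a semi-ring. Second, $\bot$ acts as an absorbing element for both operations: any expression built from $+^\bot$ and $\times^\bot$ that contains an occurrence of $\bot$ evaluates to $\bot$. This reduces the proof to a small, uniform case analysis on whether $\bot$ occurs as an operand.

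I would proceed as follows. For $(S \cup \{\bot\}, 0, +^\bot)$, commutativity and the identity law $0 +^\bot a = a$ are both direct from the definition (note $0 \in S$, so $0 +^\bot \bot = \bot$ and $0 +^\bot s = 0 + s = s$ for $s \in S$). For associativity $(a +^\bot b) +^\bot c = a +^\bot (b +^\bot c)$, split on whether at least one of $a, b, c$ equals $\bot$: if so, both sides are $\bot$ by absorption; otherwise, all intermediate values stay in $S$ and equality reduces to associativity of $+$ in $\mathbb{S}$. The monoid structure of $\times^\bot$ is handled identically, using that $1 \in S$ for the identity law.

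For distributivity, say $a \times^\bot (b +^\bot c) = (a \times^\bot b) +^\bot (a \times^\bot c)$, I would split on whether $\bot \in \{a, b, c\}$. If not, both sides equal the distributive identity in $\mathbb{S}$. If any operand is $\bot$, then by absorption every maximal subterm of both sides that contains that occurrence evaluates to $\bot$, propagating outward so both sides equal $\bot$; the symmetric case is identical. This gives all the semi-ring axioms.

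The main obstacle is cosmetic rather than mathematical: there are enough case splits (especially for distributivity) that one must be systematic to avoid missing configurations, and one should explicitly note that the construction does \emph{not} preserve strongness---indeed $0 \times^\bot \bot = \bot \neq 0$, so $0$ fails to annihilate $S \cup \{\bot\}$. This is by design and is the reason the statement claims only that $\mathbb{S}^\bot$ is a semi-ring, consistent with the more concrete definition of $\mwp^\infty$ discussed in \autoref{ssec:infinity-semi-ring}, where the asymmetric interaction between $0$ and $\infty$ is what guarantees that detected non-polynomial flows cannot be erased by later multiplication.
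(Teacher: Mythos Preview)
Your proposal is correct and aligns with the paper's own proof, which simply declares the result \enquote{immediate} and then makes the same observation you do about non-strongness (namely that \(\bot \times^{\bot} 0 = \bot\), so \(0\) fails to annihilate). Your systematic case analysis on whether \(\bot\) occurs among the operands is exactly the routine verification the paper elides; there is no additional idea in the paper beyond what you have written.
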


\begin{proof}
	The proof is immediate, but note that \(\mathbb{S}^{\bot}\) is not strong, as \(\bot \times 0 = \bot\).
\end{proof}

A good intuition on this construction comes from partial functions.
Indeed, we can define \(A \rightharpoonup \mathbb{S}\) as the semi-ring of partial functions from \(A\) to \(\mathbb{S}\), i.e.\ of functions from \(A\) to \(\mathbb{S}^{\bot}\).
Furthermore, if we identify a matrix in \(\mathbb{M} (\mathbb{S}^{\bot})\) where at least a coefficient is \(\bot \) with the matrix \(\bot\), then we get that \(\mathbb{M} (A \rightharpoonup \mathbb{S} ) \cong A \rightharpoonup \mathbb{M} (\mathbb{S} )\).
However, note that none of those semi-rings are strong.

In the particular case of \(\mathbb{M}(\mwp^\infty)\), having \(0\times \infty=\infty\) instead of \(0\times \infty=0\) as required by the strength property allows to make sure that no non-polynomial growth is deleted.
Indeed, if part of the program computes an exponential value but then throws it away, having \(0\times \infty=0\) would hide the super-polynomial computation and results in an incorrect analysis.
However, \(0\times \infty=0\) could still be useful, at the cost of losing the bounds on time and space usage for terminating programs, but providing the benefit of analyzing programs that \emph{ultimately} have polynomial dependency of the values \wrt the inputs.

\section{Omitted Proofs}
\label{app:omitted-proofs}

\detterm*

\begin{proof}
	The proof proceeds by induction on the length of the program \(P\), expressed in number of commands.
	We let \(p\) be the number of variables in \(P\), but observe that any program \(P\) can be treated as manipulating \(p'>p\) different variables, by simply adding \(p'-p\) additional rows and columns to the matrix, and leaving them unchanged by the derivation of \(P\).
	While a complete proof would need to constantly account for the number of actual and potential variables used by \(P\), we will simply assume that the reader understands that accounting for this technicality obfuscate more than it clarifies the proof, and we will freely resize the matrices to account for additional variables when needed.
	\begin{description}
		\item[If \(P\) is of length \(1\)]
			Then we know \(P\) is of the form \pr|X = e|, and only the rule A can be applied.
			But then we need to prove that all expression \pr|e| can be typed with exactly one vector.
			An expression \pr|e| is either a variable \pr|X|, or a composed expression \pr|X * Y|, \pr|X - Y|, or \pr|X + Y|.
			But then, respectively, only E\(^{\textsc{S}}\), E\(^{\textsc{M}}\) or E\(^{\textsc{A}}\) (for addition and substraction) can be applied, and this case is proven.
		\item[If \(P\) is of length \(n > 1\)]
			The we proceed by case on the structure of the command:
			\begin{itemize}
				\item If \(P\) is of the form \pr|if b then P1 $\text{\hspace{.2em}}$ else P2|, then by induction we know for \(i \in \{1, 2\}\) there exists \(p_i\) and \(M_i\) of size \(p_i \times p_i\) such that \(\vdash \text{\pr|Pi|} : M_i\).
				      If \(p_1 \neq p_2\), then letting \pr|Mj| being the smaller matrix, it is easy to rewrite \pr|Pj|'s derivation to account for \(|p_1 - p_2|\) additional variables, and as \(\oplus\) is uniquely defined, we know that \(M_1 \oplus M_2\) results in a unique matrix of size \(\max(p_1, p_2)\).
				\item If \(P\) is of the form \pr|while b do P'|, this is immediate by induction hypothesis on \pr|P'|, considering that only W\(^{\infty}\) can be applied, and that this rule produces a unique matrix.
				\item If \(P\) is of the form \pr|loop X {P'}|, this case is similar to the previous one, using L\(^{\infty}\) instead of W\(^{\infty}\).
				\item If \(P\) is of the form \pr|P1;P2|, this case is similar to the \pr|if| case, with the possible need to resize one of the matrix obtained by induction, and using that \(\otimes\) is uniquely defined.\qedhere
			\end{itemize}
	\end{description}
\end{proof}

\equiexpr*

\begin{proof}
	The proof proceeds by induction on the length of the program \(P\), expressed in number of commands.
	\begin{description}
		\item[If \(P\) is of length \(1\)]
			Then we know \(P\) is of the form \pr|X = e|, and only the rule A can be applied, in both systems.
			Hence, we need to prove that all expression \pr|e| can be typed the same way in both systems.
			A careful comparison of Figures~\ref{fig:orig-rules} and \ref{fig:new-rules} shows that if \pr|e| is of the form \pr|Xi|, then there is a small mismatch.
			In the original system, we can use either E2, and obtain \(\vdashJK \text{\pr{Xi}} : \{ _{\text{\pr|i|}}^{w}\}\), or E1, and obtain \(\vdashJK \text{\pr{Xi}} : \{ _{\text{\pr|i|}}^{m}\}\), while the only derivation in the deterministic system is using E\(^{\textsc{S}}\) to get \(\vdashJK \text{\pr{Xi}} : \{ _{\text{\pr|i|}}^{m}\}\).
			As \(m < w\), we argue that the deterministic system cannot obtain a derivation that is not useful anyway, and hence that it can be ignored.

			As for the other cases, if \pr|e| is a composed expression \pr|X * Y|, \pr|X - Y|, or \pr|X + Y|, it is easy to observe that E\(^{\textsc{A}}\) and E\(^{\textsc{M}}\) encapsulates all the possible combinations of E2 and of E1 followed by E3 or E4 that can be used.
		\item[If \(P\) is of length \(n > 1\)]
			Then the result holds by induction, once we observed that L\(^{\infty}\) and W\(^{\infty}\) are introducing \(\infty\) coefficients \emph{only if} L and W cannot be applied.
			\qedhere
	\end{description}
\end{proof} %
\section{Benchmarks}
\label{sec:app:benchmarks}

\subsection{Descriptions of program groups}

\begin{itemize}
	\item \textit{Basics} -- C programs performing operations corresponding to simple derivation trees.
	\item \textit{Implementation paper} -- example programs presented in this paper.
	\item \textit{Original paper} -- examples taken from or inspired by the original analysis~\cite{Jones2009}.
	\item \textit{Infinite} -- programs whose matrices always contain infinite coefficients.
	\item \textit{Polynomial} -- programs whose matrices do not always contain infinite coefficients.
	\item \textit{Other} -- other C programs of interest.
\end{itemize}

\subsection{Results}

The benchmarks are categorized and grouped to distinguish the type of system behavior they exercise.
For each program we capture in \autoref{tab:table-name}
\begin{enumerate}
	\item program variable count
	\item the lines of code in the source program (LOC column)
	\item clock time taken by the full analysis (excluding saving result to file, which is otherwise default behavior),
	\item number of function calls excluding builtin Python language calls, and
	\item the result of the analysis.
\end{enumerate}

Collectively the LOC, time, and function calls columns provide insight into the behavior of the analysis as different aspects of the system are being stress-tested. From the results column we report expected results on each benchmarked program. In the benchmarks table a passing result is represented with \checkmark and $\infty$ otherwise. We do not report manually computed bounds as comparison, because the analysis is carried out on individual variables, thus calculating them on multivariate programs is tedious and futile. However, for simple programs such as \href{https://statycc.github.io/pymwp/demo/#basics_while_2.c}{while\_2.c}, it is straightforward through visual inspection to verify the obtained $2 \times 2$-matrix is indeed the correct result.

These benchmarks were obtained using Python's built-in cProfile utility, extended in \texttt{pymwp} implementation to enable batch profiling. The clock times are slight overestimates because the utility adds minor runtime overhead. The number of function calls includes primitive calls, but exclude built-in Python language calls.
Full detailed results are viewable in the source code repository: \href{https://github.com/statycc/pymwp/releases/tag/profile-latest}{https://github.com/statycc/pymwp/releases/tag/profile-latest}

Details of executing machine: linux (Ubuntu), OS release: 20.04.3 (LTS), version: 20220131.1; CPU Cores: 2; CPU model: Intel(R) Xeon(R) CPU E5-2673 v4 @ 2.30GHz; Kernel release: 5.11.0-1028-azure; total memory bytes: 7284846592; Python version: 3.10.2 (x64).

\subsection{Comparison}
\label{app:sec:comparison}

It is not really meaningful or possible to compare those results with any other static analyzer, and impossible to compare it with any other implementation of this type of flow analysis.
While we could, in theory, analyze our examples with other static analyzers, their results would be incomparable, as they would produce guarantees on termination or worst case resource usage, which are both orthogonal to our polynomial bounds on value growth.
To our knowledge, the only static analyzer using similar metrics~\cite{Avanzini2017} was developed only for functional languages, thus preventing comparison.
As for implementations of the original analysis, our first attempts showed that a naive implementation would likely fail to handle the memory or time explosions.
We did, however, compare the gains resulting from the optimizations described in \autoref{sec:advantage}.
In a nuthsell, our \href{https://github.com/statycc/pymwp/pull/18/commits/fc2068cdedf9560879294b71f009ee780cf3ca86}{improved algorithm} for \href{https://statycc.github.io/pymwp/polynomial/#pymwp.polynomial.Polynomial.add}{adding} and \href{https://statycc.github.io/pymwp/polynomial/#pymwp.polynomial.Polynomial.times}{multiplying} polynomials resulted in the analysis being \href{https://github.com/statycc/pymwp/issues/17#issuecomment-854931398}{roughly \emph{five times faster}} for two programs that we estimate to be representative.

\begin{table}
	\centering
	\begin{tabular}{|l|c|c|r|r|c|}
		\hline
		\textbf{Program name}         & \textbf{Variables} & \textbf{LOC} & \textbf{Time (ms)} & \textbf{Function calls} & \textbf{Bound} \\
		\hline
		\rowcolor{lipicsgrey}
		\textit{Basics}               &                    &              &                    &                         &                \\
		assign\_expression            & 2                  & 8            & 133                & 81614                   & \checkmark     \\
		assign\_variable              & 2                  & 9            & 115                & 81238                   & \checkmark     \\
		if                            & 2                  & 9            & 118                & 82046                   & \checkmark     \\
		if\_else                      & 2                  & 7            & 118                & 82928                   & \checkmark     \\
		inline\_variable              & 2                  & 9            & 118                & 81979                   & \checkmark     \\
		while\_1                      & 2                  & 7            & 117                & 82934                   & \checkmark     \\
		while\_2                      & 2                  & 7            & 117                & 83964                   & \checkmark     \\
		while\_if                     & 3                  & 9            & 122                & 91572                   & \checkmark     \\
		\hline
		\rowcolor{lipicsgrey}
		\textit{Implementation paper} &                    &              &                    &                         &                \\
		example7                      & 3                  & 10           & 122                & 86898                   & \checkmark     \\
		example15\_a                  & 2+2                & 25           & 122                & 88763                   & \checkmark     \\
		example15\_b                  & 4                  & 16           & 137                & 122016                  & \checkmark     \\
		\hline
		\rowcolor{lipicsgrey}
		\textit{Original paper}       &                    &              &                    &                         &                \\
		example3\_1\_a                & 3                  & 10           & 110                & 85286                   & \checkmark     \\
		example3\_1\_b                & 3                  & 10           & 120                & 87637                   & \checkmark     \\
		example3\_1\_c                & 3                  & 11           & 121                & 89173                   & \checkmark     \\
		example3\_1\_d                & 2                  & 12           & 116                & 80002                   & $\infty$       \\
		example3\_2                   & 3                  & 12           & 118                & 83182                   & $\infty$       \\
		example3\_4                   & 5                  & 18           & 134                & 108890                  & $\infty$       \\
		example5\_1                   & 2                  & 10           & 116                & 81185                   & \checkmark     \\
		example7\_10                  & 3                  & 10           & 119                & 86053                   & \checkmark     \\
		example7\_11                  & 4                  & 11           & 139                & 119379                  & \checkmark     \\
		\hline
		\rowcolor{lipicsgrey}
		\textit{Infinite}             &                    &              &                    &                         &                \\
		exponent\_1                   & 4                  & 16           & 127                & 99893                   & $\infty$       \\
		exponent\_2                   & 4                  & 13           & 123                & 92846                   & $\infty$       \\
		infinite\_2                   & 2                  & 6            & 143                & 128275                  & $\infty$       \\
		infinite\_3                   & 3                  & 9            & 120                & 89880                   & $\infty$       \\
		infinite\_4                   & 5                  & 9            & 3274               & 5924420                 & $\infty$       \\
		infinite\_5                   & 5                  & 11           & 369                & 529231                  & $\infty$       \\
		infinite\_6                   & 4                  & 14           & 1624               & 2836726                 & $\infty$       \\
		infinite\_7                   & 5                  & 15           & 631                & 964189                  & $\infty$       \\
		infinite\_8                   & 6                  & 23           & 880                & 1444782                 & $\infty$       \\
		\hline
		\rowcolor{lipicsgrey}
		\textit{Polynomial}           &                    &              &                    &                         &                \\
		notinfinite\_2                & 2                  & 4            & 119                & 86174                   & \checkmark     \\
		notinfinite\_3                & 4                  & 9            & 131                & 104826                  & \checkmark     \\
		notinfinite\_4                & 5                  & 11           & 169                & 168242                  & \checkmark     \\
		notinfinite\_5                & 4                  & 11           & 174                & 176179                  & \checkmark     \\
		notinfinite\_6                & 4                  & 16           & 195                & 215765                  & \checkmark     \\
		notinfinite\_7                & 5                  & 15           & 1161               & 1961806                 & \checkmark     \\
		notinfinite\_8                & 6                  & 22           & 1893               & 3172293                 & \checkmark     \\
		\hline
		\rowcolor{lipicsgrey}
		\textit{Other}                &                    &              &                    &                         &                \\
		dense                         & 3                  & 16           & 157                & 151428                  & \checkmark     \\
		dense\_loop                   & 3                  & 17           & 269                & 353068                  & \checkmark     \\
		explosion                     & 18                 & 23           & 1296               & 2327071                 & \checkmark     \\
		gcd                           & 2                  & 12           & 114                & 84914                   & $\infty$       \\
		simplified\_dense             & 2                  & 9            & 118                & 85098                   & \checkmark     \\
		\hline
	\end{tabular}
	\caption{ Benchmark results produced by \texttt{pymwp} on \texttt{C} programs.}
	\label{tab:table-name}
\end{table}

\end{document}